\newcolumntype{L}{>{$}c<{$}} 
\newtheorem{theorem}{Theorem}
\newtheorem{definition}[theorem]{Definition}
\titleformat*{\section}{\large\bfseries}
\titleformat*{\subsection}{\bfseries}
\titleformat*{\subsubsection}{\bfseries}
\titleformat*{\paragraph}{\bfseries}
\titleformat*{\subparagraph}{\bfseries}
\definecolor{cardinal}{rgb}{0.6,0,0}
\definecolor{darkgreen}{rgb}{0,0.5,0}
\definecolor{golden}{rgb}{0.92, 0.7, 0}
\definecolor{midnight}{rgb}{0, 0, 0.5}
\definecolor{darkblue}{rgb}{0.2, 0, 0.8}
\def \be  {\begin{equation}}
\def \ee  {\end{equation}}
\def \bea {\begin{equation}\begin{aligned}}
\def \eea {\end{aligned}\end{equation}}
\def \ba  {\begin{eqnarray}}
\def \ea  {\end{eqnarray}}
\def \bb  {}
\def \lab #1 {\label{#1}}
\newcommand\ep{\epsilon}
\newcommand\ol{\overline}
\newcommand\cA{\mathcal{A}}
\newcommand\cB{\mathcal{B}}
\newcommand\cC{\mathcal{C}}
\newcommand\cE{\mathcal{E}}
\newcommand\cG{\mathcal{G}}
\newcommand\cH{\mathcal{H}}
\newcommand\cM{\mathcal{M}}
\newcommand\cR{\mathcal{R}}
\newcommand\al{\alpha}
\newcommand\fl{\mathfrak{l}}
\newcommand\bC{\mathbb{C }}
\newcommand\bF{\mathbb{F }}
\newcommand\bZ{\mathbb{Z}}
\newcommand\fm{\mathfrak{m}}
\newcommand\fn{\mathfrak{n}}
\newcommand\bk{\mathbf{k}}
\title{\textbf{Quantum Error Transmutation}}
\author[1,2,3]{Daniel Zhang\thanks{daniel.zhang@phasecraft.io}}
\author[1,4]{Toby Cubitt\thanks{toby@phasecraft.io}}
\affil[1]{Phasecraft Ltd.}
\affil[2]{St John's College, University of Oxford}
\affil[3]{Mathematical Institute, University of Oxford}
\affil[4]{Department of Computer Science, University College London}
\begin{document}

\date{ \normalsize \today \vspace{-1em}
\begin{abstract}
\normalsize
We introduce a generalisation of quantum error correction, relaxing the requirement that a code should identify and correct a set of physical errors on the Hilbert space of a quantum computer exactly, instead allowing recovery up to a pre-specified admissible set of errors on the code space. We call these quantum \emph{error transmuting} codes. They are of particular interest for the simulation of noisy quantum systems, and for use in algorithms inherently robust to errors of a particular character. Necessary and sufficient algebraic conditions on the set of physical and admissible errors for error transmutation are derived, generalising the Knill-Laflamme quantum error correction conditions. We demonstrate how some existing codes, including fermionic encodings, have error transmuting properties to interesting classes of admissible errors. Additionally, we report on the existence of some new codes, including low-qubit and translation invariant examples.
\end{abstract}}

\maketitle

\section{Introduction}

The protection of quantum information against noise is a central problem in the realisation of practical quantum computation. Quantum error correcting codes, introduced by Shor \cite{shor95}, combined with the theory of fault tolerance \cite{aharonov1997fault,kitaev1997quantum}, may provide a long-term solution. However, their implementation presents a significant challenge in the near future. For example, for effective error correction, a large number of physical qubits are required to encode each logical qubit of information.

Consequently, as we enter the era of noisy, intermediate-scale (NISQ) devices, there has been significant focus on the development of techniques which seek to mitigate errors instead of correcting them completely. These are algorithmic schemes which post-process data from noisy hardware, and are often grouped together under the term quantum error mitigation. See \cite{cai2022quantum} for a recent review.

In this work we take a different approach, introducing a framework we call quantum error \textit{transmutation} (QET). This arises from a natural generalisation of error correction. Instead of codes for which errors may be identified and corrected perfectly, we seek instead codes for which errors may be identified and corrected \textit{up to} errors on the logical qubits which are deemed admissible. This may enable a reduction in the overhead of encoding.

Subsystem codes \cite{kribs2005operator,nielsen2007algebraic} are included in this formalism; their admissible error set forms the entire group of operators preserving a subsystem of the code space. The codes we consider in this work differ in that the admissible error sets are not of this form. Instead of considering logical information to be encoded in a subsystem of the code space by identifying code space states related by gauge group operators, we regard the whole code space as containing the logical information, and the admissible errors as a set of errors to be tolerated.

There are two immediate applications of QET. First, the digital simulation of quantum systems, one of the most promising applications of near-term quantum computing. For the system of interest to be simulated, there may be a particularly `natural' form of noise, perhaps to which it would be subjected when conducting any real world laboratory experiment anyway. Thus, in a simulation, it may not be necessary to correct all physical errors on the qubits, but instead transmute them to the natural errors on the logical qubits. It was shown in \cite{bausch2020mitigating} that for two fermionic encodings (a fermion to qubit mapping taking the form of a stabiliser code \cite{Gottesman:1997zz}), certain low weight Pauli errors correspond to dephasing noise on the fermions. The latter occurs naturally when fermions are placed in a thermal bath. In this work, we leverage this to produce an error transmuting property.

A second application for these codes is for use in algorithms robust to noise of a particular character, where it may only be necessary to transmute physical errors to logical errors to which the algorithm is resilient, instead of correcting them completely. For example, it was demonstrated in \cite{fontana2021evaluating} that certain implementations of variational quantum eigensolver \cite{peruzzo2014variational} simulations were more resilient to phase damping noise versus depolarising noise, using multiple measures of state quality. Similar resilience characteristics were demonstrated for the construction of a set of target states using a layered circuit ansatz.

\paragraph{Summary} In section \ref{sec:qet} we formally define quantum error transmutation. In section \ref{sec:qetconds}, we outline necessary and sufficient algebraic conditions for a set of physical and admissible logical errors in order for error transmutation, in analogy with those for error correction. In section \ref{sec:existingexamples} we demonstrate error transmuting properties for several existing classes of error correcting codes. This includes topological codes, and more interestingly fermionic encodings, of interest due to the application outlined above. Finally, in section \ref{sec:newexamples} we introduce some novel examples of error transmuting codes, which transmute Pauli errors to dephasing errors on the logical qubits. These include low-qubit examples, a CSS example based on a QR code, and a tileable/translation-invariant code.

\section{Quantum Error Transmutation}\label{sec:qet}

In this section, we define what we mean by quantum error transmutation, as a generalisation of quantum error correction. We start by reviewing the latter, setting up the notation required throughout the rest of this work.

\subsection{Review of Error Correction}

An error correcting code is defined by a subspace (the code space) $\cC$ of the Hilbert space  $\cH$. We assume that a noise process can be described by a quantum operation $\cE$.

\begin{definition}
The code $\cC$ corrects the noise channel $\cE$ iff there exists a CPTP map $\cR$ such that $\forall \rho$ with support in $\cC$ one has:
\be
\cR \circ \cE (\rho) \propto \rho.
\ee
If the noise channel is trace-preserving, we may change `$\propto$' in the above to equality.
\end{definition}

\begin{theorem}
Let $P$ be the projector onto $\cC$, and $\{E_i\}$ be the Kraus operators of the quantum operation $\cE$. Then necessary and sufficient conditions for $\cC$ to correct $\cE$ is that:
\be\label{eq:generalcorrection}
PE_i^\dagger E_j P = \al_{ij} P
\ee
for some Hermitian matrix $\al$.
\end{theorem}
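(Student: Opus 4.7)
My plan is to prove both directions via the standard Knill--Laflamme argument. For sufficiency, I assume $PE_i^\dagger E_j P = \al_{ij} P$. Since $\al$ is Hermitian (indeed positive semi-definite, as becomes clear from the computation below), I diagonalise it as $\al = u d u^\dagger$ with $u$ unitary and $d = \mathrm{diag}(d_k)$, $d_k \geq 0$. Passing to the unitarily equivalent Kraus set $F_k = \sum_i u_{ik}^* E_i$ represents the same channel and diagonalises the hypothesis into the ``canonical'' form $PF_k^\dagger F_l P = d_k \delta_{kl} P$. A polar decomposition of $F_k P$ then gives $F_k P = \sqrt{d_k}\,V_k P$, where $V_k P$ is a partial isometry mapping $\cC$ onto an error subspace $\cC_k \subset \cH$. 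The orthogonality relations above imply that the $\cC_k$ are mutually orthogonal, so defining recovery Kraus operators $R_k = P V_k^\dagger$, augmented by a single additional operator supported on $\bigl(\bigoplus_k \cC_k\bigr)^\perp$ to enforce trace preservation, yields a CPTP map $\cR$ satisfying $\cR \circ \cE(\rho) = \bigl(\sum_k d_k\bigr)\rho$ for any $\rho$ supported in $\cC$.

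For necessity, suppose a CPTP recovery $\cR$ with Kraus operators $\{R_k\}$ exists. The composition $\cR \circ \cE$ has Kraus set $\{R_k E_i\}$, and by assumption acts as a positive scalar multiple of the identity on states supported in $\cC$. Equivalently, the CP map $\rho \mapsto \sum_{k,i} R_k E_i P \rho P E_i^\dagger R_k^\dagger$ has a single-operator Kraus representation $\sqrt{c}\,P$. Invoking the unitary freedom in Kraus decompositions of a CP map then forces
\be
R_k E_i P = \lb_{ki} P
\ee
for some scalars $\lb_{ki}$. Using $\sum_k R_k^\dagger R_k = \bI$, I compute
\be
PE_i^\dagger E_j P = \sum_k P E_i^\dagger R_k^\dagger R_k E_j P = \sum_k \ol{\lb_{ki}}\,\lb_{kj}\, P,
\ee
which exhibits the required form with $\al_{ij} = \sum_k \ol{\lb_{ki}}\lb_{kj}$, manifestly Hermitian (and positive semi-definite, as a Gram matrix).

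The main technical obstacle lies in the necessity direction: pinning down $R_k E_i P = \lb_{ki} P$. This rests on the classification of Kraus decompositions of a CP map, namely that two Kraus families represent the same channel if and only if they are related by an isometry on the Kraus index (after padding with zero operators to equal cardinality). Applied here with one side given by the trivial single-operator decomposition $\{\sqrt{c}\,P\}$ of a scalar multiple of the identity on $\cC$, this immediately yields the scalar proportionality on the code space. The sufficiency direction is essentially algebraic once polar decomposition has been used to extract the isometries $V_k$; the only care point is ensuring trace preservation of the constructed recovery, which is arranged by the augmentation on the orthogonal complement of the error subspaces and does not interfere with the action on images of code states.
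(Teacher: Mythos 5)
Your proof is correct and is essentially the standard Knill--Laflamme argument that the paper itself does not reproduce but defers to Nielsen and Chuang: unitary freedom of Kraus representations plus polar decomposition for sufficiency, and the same unitary-freedom theorem applied to the single-operator representation $\{\sqrt{c}\,P\}$ for necessity. No gaps; the minor point about completing $\cR$ to a trace-preserving map on $\bigl(\bigoplus_k \cC_k\bigr)^\perp$ is handled correctly.
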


The proof can be found e.g.\ in Nielsen and Chuang \cite{nielsen_chuang_2010}. A code $\cC$ which corrects a noise process $\cE$ with Kraus operators $\{E_i\}$ will also correct any noise process whose Kraus operators are linear combinations of the $E_i$.

An important class of QEC codes are stabiliser codes. Let $\cH = (\bC^2)^{\otimes n}$, and denote the Pauli group on $n$ qubits by $\cG_n$. Then an $[n,k]$ stabiliser code $\cC$ is the vector subspace stabilised by an abelian subgroup $S$ of $\cG_n$ (which does not contain $-I$), with $n-k$ generators $\{g_l\}$. Let $N(S)$ denote the normaliser of $S$ in $\cG_n$. For the Pauli group, $N(S)$ is also the centraliser of $S$.
For stabiliser codes, it is a standard result that the quantum error correction conditions \eqref{eq:generalcorrection} simplify:

\begin{theorem}
For a stabiliser code, and $\{E_i\}$ a set of errors in $\cG_n$, the error correcting conditions in \eqref{eq:generalcorrection} for $\cC$ to correct $\{E_i\}$ reduce to:
\be\label{eq:stabiliserconditions}
E_j^{\dagger}E_k \in (G-N(S)) \cup S \quad \forall \, j,k.
\ee
\end{theorem}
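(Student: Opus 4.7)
The plan is to classify, for any pair of Pauli errors $E_j, E_k \in \cG_n$, where the product $E_j^\dagger E_k$ sits relative to the stabiliser group $S$ and its normaliser $N(S)$, and to show that the Knill-Laflamme matrix element $P E_j^\dagger E_k P$ is proportional to $P$ in exactly the cases described by \eqref{eq:stabiliserconditions}. Since $\cG_n$ is closed under multiplication up to phases, $E_j^\dagger E_k$ is again an element of $\cG_n$, and we partition $\cG_n$ into the three disjoint pieces $S$, $N(S)\setminus S$, and $\cG_n \setminus N(S)$.

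I would treat the three cases in turn. First, if $E_j^\dagger E_k \in S$, then it acts as the identity on any state in $\cC$, so $P E_j^\dagger E_k P = P$ up to an overall phase, which is of the desired form with $\alpha_{jk}$ a scalar. Second, if $E_j^\dagger E_k \in \cG_n \setminus N(S)$, there exists a stabiliser generator $g_l$ that anticommutes with it. Using $g_l P = P g_l = P$ together with anticommutation gives
\be
P E_j^\dagger E_k P \;=\; P g_l E_j^\dagger E_k P \;=\; -P E_j^\dagger E_k g_l P \;=\; -P E_j^\dagger E_k P,
\ee
so the matrix element vanishes and again has the required form with $\alpha_{jk}=0$. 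Thus, if every product $E_j^\dagger E_k$ lies in $S \cup (\cG_n \setminus N(S))$, the Knill-Laflamme conditions \eqref{eq:generalcorrection} are satisfied; this establishes the sufficient direction.

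For necessity I would argue the contrapositive: suppose some product $E_j^\dagger E_k$ lies in $N(S)\setminus S$. Then it is a non-trivial element of the logical Pauli group $N(S)/S$, and by the standard structure theorem for stabiliser codes acts on $\cC$ as a non-trivial logical Pauli operator. Concretely, one may pick a logical basis $\{\ket{\overline{x}}\}_{x\in\{0,1\}^k}$ of $\cC$ for which this operator permutes and/or signs basis vectors in a way that is not a scalar multiple of the identity, so $P E_j^\dagger E_k P$ cannot be of the form $\alpha_{jk} P$. This rules out the $N(S)\setminus S$ case and yields \eqref{eq:stabiliserconditions}.

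The main subtlety, and the only non-routine step, is the necessity direction: one has to be sure that a non-trivial coset of $S$ in $N(S)$ really does act as a non-scalar operator on $\cC$, i.e.\ that $N(S)/S$ maps faithfully to the logical Pauli group on $(\bC^2)^{\otimes k}$. I would handle this by invoking the standard stabiliser formalism, choosing logical operator representatives $\overline{X}_i, \overline{Z}_i$ generating $N(S)/S$ modulo phases and noting that any non-identity product of these acts non-trivially on the logical computational basis; everything else is just bookkeeping with Pauli (anti)commutation and the projector identity $g_l P = P$.
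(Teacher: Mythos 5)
The paper does not actually prove this theorem --- it is stated as ``a standard result'' with no argument given --- so there is nothing to compare against line by line; your proposal supplies the standard textbook proof, and it is essentially correct. The sufficiency direction (elements of $S$ fix $\cC$ so give $\alpha_{jk}=1$; elements outside $N(S)$ anticommute with some generator $g_l$, and $g_lP=P$ forces $PE_j^\dagger E_kP=-PE_j^\dagger E_kP=0$) is exactly right, and Hermiticity of the resulting $\alpha$ is automatic from taking adjoints. The one caveat is in your necessity step: an element of $N(S)\setminus S$ need not act as a non-scalar on $\cC$, because of global phases --- e.g.\ $-s$ for $s\in S$ lies in $N(S)\setminus S$ (since $-I\notin S$) yet acts as $-\bI$ on the code space, so $PE_j^\dagger E_kP=-P$ is still of the form $\alpha_{jk}P$. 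Your claim that every non-trivial coset of $S$ in $N(S)$ acts non-trivially is therefore only true modulo phases; this is the standard convention (the paper itself later declares that all such equations hold up to global phases in $\{-1,\pm i\}$), so it is a bookkeeping point rather than a genuine gap, but it is worth stating explicitly that the necessity direction holds for $N(S)/\langle S,iI\rangle$ rather than literally for $N(S)\setminus S$.
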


The syndrome of an an element $E$ of $\cG_n$ is $f(E)$, where $f : \cG_n \rightarrow (\bZ_2)^k$ is the homomorphism defined by
\begin{equation}
    E \, g_l = (-)^{f(E)}\, g_l\, E.
\end{equation}
The partition of $\cG_n$ into the $2^{n-k}$ cosets of $N(S)$ is a partition into elements of the same syndrome, as $\text{ker}(f) = N(S)$. The conditions \eqref{eq:stabiliserconditions} state that any pair of errors lie either in different cosets of $N(S)$ (different syndrome), or are related by an element of the stabiliser (and so act identically on $\cC$).

The \textit{weight} of an element of $\cG_n$ is the number of terms in the tensor product which are not the identity. The \textit{distance} of a stabiliser code $\cC$ is the minimum weight of an element of $N(S) \backslash S$.

\subsection{Quantum Error Transmutation}

The goal of quantum error transmutation is to modify a given noise channel $\cE$ on the physical qubits into a noise channel $\cM$ on the logical qubits, where $\cM$ takes a particular form which we now describe.
Standard quantum error correction is then the special case in which $\cM$ is the identity channel.

In this work we restrict to the case where $\cC$ is a stabiliser code, with stabiliser $S$. We note that $N(S)/S$ is isomorphic to the logical Pauli group $\overline{\cG}_k$ on the encoded $k$ qubits. Importantly, this isomorphism is not canonical or unique. Once we fix an arbitrary choice of isomorphism, $N(S)/S \cong \overline{\cG}_k $, there is a remaining freedom to apply any element of $\text{Aut}(\ol{\cG}_k)$, corresponding to a choice of Clifford group element on $k$ qubits. As we shall see, this freedom can be used to our advantage in designing error transmuting codes.\footnote{For example, any abelian subgroup of $\ol{\cG}_k$ can be mapped to one generated by logical Pauli $\ol{Z}$ operators (up to phases).}

\begin{itemize}
    \item We define an \textit{admissible error set} $\overline{M} = \{\overline{m}_{\al}\}$, to be a subset of the logical Pauli group $\ol{\cG}_k$. We define an \textit{admissible logical noise channel} (ALNC) $\cM$ as one stabilising the code space $\cC$, whose Kraus operators proportional to elements of the cosets $\{\cA(\overline{m}_{\al})S\} \subset \cG_n$,  where $\cA$ is a choice of isomorphism:
    \begin{equation}
        \cG_k \xrightarrow[]{\cA} N(S)/S.
    \end{equation}
    We will always assume that the identity is an admissible error.

    \item We define a \textit{physical error set} $E = \{E_i\}$ to be a set of errors on $\cH$, which we restrict to lie in the Pauli group $\cG_n$. We will consider physical noise channels with Kraus operators proportional to elements of $\{E_i\}$ (Pauli channels).
\end{itemize}

\begin{definition}\label{eq:qetconditions}
We say that $\cC$ is a quantum error transmuting code (QETC) for a pair of $\{E_i, \overline{m}_{\al}\}$ of physical errors and logical errors if there exists a single recovery operation $\cR$ such that for all physical noise channels $\cE$ with Kraus operators proportional to elements of $\{E_i\}$, we have:
\be
\cR \circ \cE (\rho) = \cM (\rho)
\ee
where $\cM$ is an ALNC for the logical errors $\{\overline{m}_{\al}\}$ as defined above.

\end{definition}

Thus, an error transmuting code converts a set of physical errors to a set of errors on the logical qubits, whose identification depends on a choice of isomorphism $\cA$.

Our definitions above restrict the set of errors (both physical and logical) to Pauli errors. The main properties of the Pauli group leveraged are the self-inverse property (up to phase), and linear independence as a basis for $\text{Mat}_{2^n\times 2^n}(\cC)$. Thus it is possible to extend many of our definitions to sets of errors where this is also true, although we will not dwell on this in this work.

There are a few more important things to note:
\begin{itemize}
    \item $\cR$ is required to correct $\cE$ to \textit{any} ALNC $\cM$. Some of the $\{\ol{m}_{\al}\}$ may in fact have zero probability in $\cM$.
    \item It is not automatic from the definition that $\cR$ will also transmute physical errors which are linear combinations of $\{E_i\}$ to linear combinations of elements of $\{\overline{m}_{\al}\}$. However, we will see that stabiliser QET codes have this property.
    \item The set of QEC codes for a physical error set $\{E_i\}$ is the same as the set of QET codes with trivial admissible logical error set $\overline{M} = I$. In this sense QET is a generalisation of QEC.
\end{itemize}

\section{Quantum Error Transmutation Conditions}\label{sec:qetconds}

We now describe necessary and sufficient conditions on the admissible and physical error sets for a code $\cC$ to be a QETC. For readability, we divide this into the cases where $M$ forms a subgroup of $\overline{\cG}_k$, and when it does not. Of course, the former is a case of the latter, and we will see the conditions collapse appropriately. We ignore global phases of $\{-1,\pm i\}$, and equations should always be understood to hold up to such a global phase.

For a choice of isomorphism $\cA$,  $\cA(\overline{M})$ corresponds to a set of cosets of $S$ in $N(S)$.  We define the lift $M$ to be the elements of $\cG_n$ contained in these cosets. It is the image of $\ol{M}$ under the map:
\begin{equation}
    \ol{\cG}_k \xrightarrow[]{\cA} N(S)/S \xrightarrow[]{\phi_S^{-1}} N(S) \subset \cG_n,
\end{equation}
where $\phi^{-1}_S$ is the preimage map for the quotient by the stabiliser. By assumption, $M$ always contains $S$.

Let us abuse notation also denote the cosets in $\cA(\overline{M})$ by $\{\overline{m}_{\al}\}$. The elements $m_{\al,k} \in \overline{m}_{\al}$ are related by elements of the stabiliser, and thus act identically on the code space $\cC$. In this notation $M = \bigcup_{\al} \overline{m}_{\al}$.

\subsection{Group Case}\label{sec:qetgroups}

If $\ol{M} \subset \ol{\cG}_k$ forms a subgroup, then by a standard result, the lift $M$ forms a subgroup of $N(S)$ (and thus $\cG_n$). That is: $\ol{M} \cong \cA(\ol{M}) = M/S$. The necessary and sufficient conditions for a QET code in this case are then a rather simple modification of the QEC conditions.

\begin{theorem}
$\cC$ is a quantum error transmuting code for a pair $\{E, \ol{M}\}$ if and only if there exists an isomorphism $\cA$ such that the errors obey:
\begin{equation}\label{eq:qetgroups}
E_j^{\dagger}E_k \in (G-N(S)) \cup M \quad \forall j,k,
\end{equation}
for $M$ defined as $\phi_S^{-1} \circ \cA(\ol{M})$ as above.
\end{theorem}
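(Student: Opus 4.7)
The plan is to prove both directions of the equivalence by leveraging the Knill--Laflamme-type condition \eqref{eq:generalcorrection} together with the stabiliser structure. The central observation is that for Pauli operators, $E_j^\dagger E_k$ either lies outside $N(S)$ (different syndromes, hence $P E_j^\dagger E_k P = 0$) or lies in $N(S)$, in which case it acts on $\cC$ as a definite logical Pauli $\ol{e}_{jk}$; the QET condition \eqref{eq:qetgroups} asserts precisely that in the latter case $\ol{e}_{jk}$ always lies in the admissible set $\ol{M}$.

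For sufficiency, I would construct an explicit recovery. Assume \eqref{eq:qetgroups}; then any two errors $E_i,E_j$ sharing a syndrome $s$ satisfy $E_i^\dagger E_j \in N(S) \cap [(G-N(S)) \cup M] = M$. For each occupied syndrome $s$, pick a representative $E_{(s)} \in E$ and set $R_s = E_{(s)}^\dagger \Pi_s$, where $\Pi_s$ projects onto the syndrome-$s$ subspace; completing $\{R_s\}$ with $\Pi_s$ on any unoccupied syndromes makes it CPTP since $\sum_s \Pi_s = I$. A direct calculation, using that $E_i \rho E_i^\dagger$ is supported on the syndrome-$f(E_i)$ subspace whenever $\rho$ has support in $\cC$, then gives
\bea
\cR \circ \cE(\rho) = \sum_i p_i\, m_i\, \rho\, m_i^\dagger, \qquad m_i := E_{(f(E_i))}^\dagger E_i \in M,
\eea
and regrouping these Kraus operators by the coset $\ol{m}_\al S$ to which each $m_i$ belongs exhibits the output as an ALNC.

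For necessity, suppose $\cC$ is a QET code with recovery $\cR=\{R_\mu\}$. Applying Definition~\ref{eq:qetconditions} to the trivial Pauli channel with single Kraus operator $E_i$ gives $\cR\circ(E_i\cdot E_i^\dagger) = \cM_i$, an ALNC with Kraus operators $\{\sqrt{q_{i,\al}}\,m_{i,\al}\}$ obeying $m_{i,\al}P = \ol{m}_\al P$. Kraus-decomposition freedom for channels agreeing on inputs supported in $\cC$ yields an isometry $u^{(i)}$ with $R_\mu E_i P = \sum_\al u^{(i)}_{\mu\al}\sqrt{q_{i,\al}}\,\ol{m}_\al P$, and similarly for $E_j$. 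Multiplying and summing via $\sum_\mu R_\mu^\dagger R_\mu = I$,
\bea
P E_i^\dagger E_j P = \sum_{\al,\beta} c^{(ij)}_{\al\beta}\, \ol{m}_\al^\dagger \ol{m}_\beta P.
\eea
If $E_i^\dagger E_j \in G - N(S)$ the QET condition holds automatically; otherwise the LHS equals $\ol{e}_{ij} P$ for a single logical Pauli. The group hypothesis forces $\ol{m}_\al^\dagger \ol{m}_\beta \in \ol{M}$ for all $\al,\beta$, and since distinct logical Paulis act linearly independently on $\cC$, $\ol{e}_{ij}$ must itself equal one of these products, so $\ol{e}_{ij} \in \ol{M}$ and hence $E_i^\dagger E_j \in M$.

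The main obstacle I anticipate is the Kraus-freedom step in the necessity direction: one must verify carefully that ``same channel on inputs supported in $\cC$'' suffices to unitarily relate the Kraus operators of $\cR\circ(E_i\cdot E_i^\dagger)$ and $\cM_i$ after composition with $P$, and track the stabiliser-induced phases in the identification $m_{i,\al} P = \ol{m}_\al P$. The closure property $\ol{M}\cdot\ol{M} = \ol{M}$ is what makes the group case clean; without it the argument that $\ol{e}_{ij}$ is forced into $\ol{M}$ would fail, which is exactly why the non-group case will require a more elaborate formulation treating admissible products separately.
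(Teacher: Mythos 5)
Your proposal is correct and follows essentially the same route as the paper: sufficiency via syndrome measurement followed by applying a coset representative's inverse (so that residual errors land in $M$), and necessity via the Nielsen--Chuang unitary freedom of Kraus decompositions to obtain $PE_i^\dagger E_j P = \sum_{\al\beta} c_{\al\beta}\, m_\al^\dagger m_\beta P$, then invoking closure of $M$ under products and the linear independence of logical Paulis on $\cC$ to force $E_i^\dagger E_j \in M$. The only cosmetic difference is that you extract the bilinear identity from two single-Kraus-operator channels rather than from one general channel with $p_i, p_j \neq 0$, which is precisely the variant the paper itself uses in its general (non-group) case, and the ``Kraus-freedom'' step you flag as a risk is the same standard theorem the paper cites.
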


\begin{proof}
Let us first establish sufficiency, which is easy. Equation \eqref{eq:qetgroups} implies any pair of errors are either in different cosets, or they must be related by an element of $M$. Let us define the operation $\cR$ as follows. First, perform a syndrome measurement (i.e.\ a simultaneous measurement of all the stabiliser generators), projecting $\cE$ onto a channel where only errors in a given coset of $N(S)$ in $\cG_n$ are present. One can then apply any of the errors in that coset (or a probabilistic mixture, see later) to obtain an ALNC.

This proof of necessity is similar to that of normal error correction. Let $\{R_\mu\}$ be the Kraus operators of the recovery operation $\cR$. Let $\cE(\rho) = \sum_i p_i E_i \rho E_i^\dagger$, i.e.\ $p_i$ is the probability of an error $E_i$ (so $\sum p_i = 1$). By assumption there is an ALNC $\cM$ such that, for all $\rho \in \cH$:
\be
\sum_{\mu, i} p_i R_{\mu} E_i P \rho P E_i^\dagger R_{\mu}^\dagger = \sum_{\al} q_{\al} m_{\al} P \rho P m_{\al}^{\dagger}.
\ee
In the above we have, without loss of generality, written the ALNC in terms of a single representative $m_{\al}$ in $M$ of each coset $\overline{m}_{\al}$, as all admissible errors in the same coset $\overline{m}_{\al}$ act the same on $\cC$. We may regard $q_{\al}$ as the total probability of an error in $\overline{m}_{\al}$ occurring.

By a standard theorem, see \textit{e.g.} theorem 8.2 in \cite{nielsen_chuang_2010}, we must have:
\be
\sqrt{p_i} R_{\mu} E_i P = \sum_{\al} \sqrt{q_\al} U_{\mu i , \al}  m_{\al} P
\ee
for some unitary $U_{\mu i , \al}$ (treating $\mu i$ as one index), possibly after padding with zero operators. Then:
\bea
  \sqrt{p_ip_j}&\sum_{\mu} PE_i^\dagger R_\mu^\dagger R_{\mu} E_j P \\
= \sum_{\mu }&\sum_
{\al\beta} U_{\mu i ,\al}^* U_{\mu j,\beta} \sqrt{q_{\al}q_{\beta}} P m_{\al}^\dagger m_{\beta} P.
\eea
Note $\cR$ is trace-preserving, and we may choose $p_i, p_j \neq 0$ since the definition of QET states that $\cR$ is a transmuting operation for all noise channels with Kraus operators proportional to elements of $E$. Thus for for some $ C_{ij,\al \beta}$,  we have:
\be\label{eq:bilinear}
PE_i^{\dagger}E_j P = \sum_{\al \beta} C_{ij,\al \beta} m_{\al}^\dagger m_{\beta} P.
\ee

Suppose there is a pair $E_i$ and $E_j$ such that $E_i^{\dagger} E_j \in N(S)\backslash M$, violating \eqref{eq:qetgroups}. Then \eqref{eq:bilinear} states that:
\be
E_i^{\dagger}E_j P = \sum_{\al} k_{\al} m_{\al} P
\ee
for some coefficients $k_{\al}$, since $m_{\al}^{\dagger} = \pm m_{\al}$ (as they are Pauli errors), and $m_{\al} m_{\beta} \in M$ as $M$ is a group. We regard $E_i^{\dagger}E_j P$ and $m_{\al} P$ as elements of $\cG_k$ and linear maps on $\cC$. The elements of $\cG_k$ are a basis of $\text{Aut}\left((\bC^2)^{\otimes k}\right)$ and are in particular linearly independent. The right-hand side is in $\ol{M}$, but the left-hand side is in $\ol{\cG}_k \backslash \ol{M}$, which is a contradiction. Thus, \eqref{eq:qetgroups} must hold.
\end{proof}

\subsubsection*{Recovery Operation}

It is natural to ask which probability distributions over admissible errors $\{\ol{m}_{\al}\}$ are possible to obtain. This clearly depends on a choice of recovery protocol $\cR$. Let us describe now a more general recovery protocol $\cR$ than the one used in the sufficiency proof above. In section \ref{sec:generalcase}, we show the probability distributions $\{q_{\al}\}$ over admissible errors obtainable via this protocol are actually the most general set of probability distributions possible.

Let us now label the physical error set by the cosets they belong in, $E_{\fn} := \{E_{\mathfrak{n},i}\}$, where $i$ now labels the errors in a given coset $\mathfrak{n}$ of $N(S)$. The most general noise channel we consider is described by a set of probabilities $p_{\mathfrak{n},i}$ such that
\be \label{eq:errorchannelgroup}
\cE(\rho) = \sum_{\mathfrak{n},i} p_{\mathfrak{n},i} E_{\mathfrak{n},i} \rho E_{\mathfrak{n},i}^{\dagger}.
\ee
The steps in the recovery protocol are:
\begin{itemize}
    \item A syndrome measurement, projecting onto a channel where only errors in a given coset of $N(S)$ in $\cG_n$ occur. Label the $2^{n-k}$ cosets and their projectors by $\mathfrak{n}$ and $P_{\mathfrak{n}}$ respectively.

    \item Conditional on the measurement, apply a recovery operation $\cR_{\mathfrak{n}}$. The Kraus operators $R_{\mathfrak{n},\al}$ of $\cR_{\mathfrak{n}}$ can be described as follows. Without loss of generality, pick a reference error $E_{\mathfrak{n},1}$ with syndrome $\mathfrak{n}$. For each $\ol{m}_{\al}$, take $R_{\mathfrak{n},\al} = \sqrt{r_{\mathfrak{n},\al}}m_{\al}E_{\mathfrak{n},1}^{\dagger}$, where $\sum_{\al} r_{\mathfrak{n},\al}=1$.  Then we have:
    \be
    R_{\mathfrak{n},\al} E_{\mathfrak{n},j} P 
    = \sqrt{r_{\mathfrak{n},\al}} m_{\al} E_{\mathfrak{n},1}^{\dagger}E_{\mathfrak{n},j}  P
    \ee
    for all $E_{\mathfrak{n},j}$ with the same syndrome.
    Note  $m_{\al} E_{\mathfrak{n},1}^{\dagger}E_{\mathfrak{n},j} \in M$ by \eqref{eq:qetgroups} and the fact that $M$ is a group. Thus $\cR_{\mathfrak{n}}$ maps an error channel with operators in a given coset of $N(S)$ to an ALNC, and thus by linearity it maps all noise channels with operators in $E$ to an admissible channel. The choice of parameters $\{r_{\mathfrak{n},\al}\}$ determine the form of the final ALNC.
\end{itemize}
Thus the total recovery operation is given by:
\be
\cR(\rho) = \sum_{\mathfrak{n}, \al} r_{\mathfrak{n}, \al} m_{\al} E_{\mathfrak{n},1}^{\dagger}P_{\mathfrak{n}} \rho P_{\mathfrak{n}}E_{\mathfrak{n},1} m_{\al}^{\dagger},
\ee
and transmutes \eqref{eq:errorchannelgroup} to an ALNC as above.

\subsection{General Case}\label{sec:generalcase}

Let us now generalise to the case where $\ol{M}$ does not form a group. We continue to label the cosets of $N(S)$ by $\fn$, and errors in a given coset by $E_{\fn} := \{E_{\fn,i}\}$.

\begin{theorem}\label{thm:qetgeneral}
$\cC$ is a quantum error transmuting code for a pair $\{E, \ol{M}\}$ if and only if there exists an isomoprhism $\cA: \ol{\cG}_k \rightarrow N(S)/S$, such that for each $\mathfrak{n}$ there exists a map:
\be
\pi_{\fn}: E_{\fn}  \rightarrow \cA(\ol{M}),\qquad \pi_{\fn}(E_{\fn,i}) := \ol{m}_{\pi_{\fn},i}
\ee
so that $\forall\, E_{\fn,i},\,E_{\fn,j} \in E_{\fn}$:
\be \label{eq:qetcgeneral1}
E_{\fn,i}^\dagger E_{\fn,j}|_{\cC} = {\ol{m}}_{\pi_{\fn},i}^\dagger \ol{m}_{\pi_{\fn},j}.
\ee
In the above, both sides are regarded as linear maps on $\cC$. We have again used the same notation $\ol{m}_{\pi_{\fn},i}$ for elements of $\ol{\cG}_k$ and $N(S)/S$, with the isomorphism $\cA$ understood. Equivalently, writing $m_{\fn,i}$ as a choice of representative of $\ol{m}_{\fn,i}$ in $N(S)$ and $\cG_n$:
\be \label{eq:qetcgeneral2}
E_{\fn,i}^\dagger E_{\fn,j} P =  m_{\pi_{\fn},i}^\dagger m_{\pi_{\fn},j} P.
\ee
\end{theorem}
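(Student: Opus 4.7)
The plan is to mirror the two-direction structure of the group-case proof in Section~\ref{sec:qetgroups}: an explicit recovery construction for sufficiency, and a Kraus decomposition analysis for necessity. The substantive new feature is that $\cA(\ol{M})$ is no longer closed under multiplication, so products $m_\al^\dagger m_\beta$ need not lie in $M$ and one cannot collapse everything into the uniform form \eqref{eq:qetgroups}. Instead, the content of the theorem is that on each syndrome coset $\fn$ there is a single ``anchor'' element $\ol m_{\pi_\fn,1} \in \cA(\ol M)$ that realigns every $L_{\fn,i} := E_{\fn,1}^\dagger E_{\fn,i} \in N(S)$ (a representative logical Pauli) with an element of $\cA(\ol M)$, after which $\pi_\fn$ is forced by $\pi_\fn(E_{\fn,i}) := \ol m_{\pi_\fn,1}\,\ol L_{\fn,i}$.

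For sufficiency, I would assume \eqref{eq:qetcgeneral2} and define $\cR$ as a syndrome measurement with projectors $\{P_\fn\}$ followed, conditional on outcome $\fn$, by the operator $R_\fn := m_{\pi_\fn,1} E_{\fn,1}^\dagger P_\fn$ for a fixed reference $E_{\fn,1} \in E_\fn$. Self-inverseness of Paulis gives $R_\fn^\dagger R_\fn = P_\fn$, hence $\sum_\fn R_\fn^\dagger R_\fn = I$ and $\cR$ is CPTP. Applying \eqref{eq:qetcgeneral2} with indices $(1,i)$ in the form $E_{\fn,1}^\dagger E_{\fn,i} P = m_{\pi_\fn,1}^\dagger m_{\pi_\fn,i} P$ collapses the action to $R_\fn E_{\fn,i} P = m_{\pi_\fn,1} m_{\pi_\fn,1}^\dagger m_{\pi_\fn,i} P = m_{\pi_\fn,i} P$, so $\cR$ transmutes every Pauli channel supported on $E$ into the ALNC $\rho \mapsto \sum_{\fn,i} p_{\fn,i}\, m_{\pi_\fn,i}\, \rho\, m_{\pi_\fn,i}^\dagger$. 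Additional random mixing, in the spirit of Section~\ref{sec:qetgroups}, can then be layered on afterwards to broaden the achievable output distributions.

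For necessity, I would take a channel $\cE$ with $p_{\fn,i} > 0$ for every $(\fn,i)$ and invoke Theorem~8.2 of \cite{nielsen_chuang_2010} to relate the Kraus decomposition $\{\sqrt{p_{\fn,i}}\, R_\mu E_{\fn,i}|_\cC\}$ to the ALNC decomposition $\{\sqrt{q_\al}\, m_\al|_\cC\}$. This yields scalars $c_{(\mu,\fn,i),\al}$ with $R_\mu E_{\fn,i} P = \sum_\al c_{(\mu,\fn,i),\al}\, m_\al P$. Substituting the Pauli identity $E_{\fn,i} = E_{\fn,1} L_{\fn,i}$ produces the alternative expression $R_\mu E_{\fn,i} P = \sum_\al c_{(\mu,\fn,1),\al}\, (m_\al L_{\fn,i})\, P$. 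The hard step is then to equate these two expansions and use linear independence of representative logical Paulis acting on $\cC$ to conclude that whenever $c_{(\mu,\fn,1),\al} \neq 0$, the product $m_\al L_{\fn,i}$ must lie in $M$ for \emph{every} $i \in E_\fn$.

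At least one such $(\mu,\al)$ exists for each $\fn$, because $\cR$ composed with the single-Kraus channel $\rho \mapsto E_{\fn,1}\rho E_{\fn,1}^\dagger$ is trace preserving and hence nonzero on $\cC$. Picking one and setting $\pi_\fn(E_{\fn,i}) := \ol m_\al\,\ol L_{\fn,i}$ then places the image in $\cA(\ol M)$, and a direct calculation gives $\ol m_{\pi_\fn,i}^\dagger \ol m_{\pi_\fn,j} = \ol L_{\fn,i}^\dagger \ol L_{\fn,j} = E_{\fn,i}^\dagger E_{\fn,j}|_\cC$, which is \eqref{eq:qetcgeneral2}; in particular this verification is independent of the choice of anchor. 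The main obstacle I anticipate is the linear-independence step: different values of $\al$ send $L_{\fn,i}$ to logical Paulis scattered throughout $\ol{\cG}_k$, and one must rule out accidental cancellations to extract a uniform anchor $\ol m_{\pi_\fn,1}$ valid for all $i$ simultaneously, which is precisely where the non-group structure of $\ol M$ forces a more delicate argument than in Section~\ref{sec:qetgroups}.
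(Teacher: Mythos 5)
Your sufficiency argument is the paper's own: syndrome measurement followed by $R_\fn = m_{\pi_\fn,1}E_{\fn,1}^\dagger$, with the identity $R_\fn E_{\fn,i}P = m_{\pi_\fn,i}P$ extracted from \eqref{eq:qetcgeneral2}; that direction is complete. In the necessity direction your setup is sound but you stop exactly at the step you flag as ``the hard step,'' so as written the proof is not finished. The paper closes this step by a different and cleaner device: since $\cR$ must transmute \emph{every} distribution $\{p_{\fn,i}\}$, by linearity it must in particular transmute the single-error channel $\rho\mapsto E_{\fn,i}\rho E_{\fn,i}^\dagger$ to an ALNC. Applying the unitary-freedom theorem to \emph{that} channel and absorbing the unitary into the Kraus operators (which leaves $\cR$ unchanged) gives $R_\al E_i P = \sqrt{q^i_\al}\,m_\al P$ with a \emph{single} admissible Pauli on the right, not a linear combination. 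Feeding $E_j = E_i(E_i^\dagger E_j)$ through the same Kraus operators and demanding the output again be an ALNC then yields $\sqrt{q^i_\al}\,m_\al E_i^\dagger E_j P = \sum_\beta V_{\al\beta}\sqrt{q^j_\beta}\,m_\beta P$, and since the left side is one logical Pauli while the right side is a combination of linearly independent ones, it must coincide with a single term: $E_i^\dagger E_j P = m_\al^\dagger m_\beta P$. This is the one-line version of the matching argument you were worried about.

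For what it is worth, the obstacle you anticipate in your own route is not actually an obstacle: in the identity $\sum_\al c_{(\mu,\fn,1),\al}\,m_\al L_{\fn,i}P = \sum_\beta c_{(\mu,\fn,i),\beta}\,m_\beta P$, the operators $m_\al L_{\fn,i}$ for distinct $\al$ lie in distinct cosets of $S$ inside $N(S)$ (multiplication by the fixed $L_{\fn,i}$ permutes cosets), so their restrictions to $\cC$ are linearly independent and no cancellation between different $\al$ can occur. Coefficient matching therefore forces $m_\al L_{\fn,i}\in M$ for every $\al$ in the support of $c_{(\mu,\fn,1),\cdot}$ and every $i$ simultaneously, which hands you the uniform anchor and completes your version. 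So your proposal can be made to work, but you should either carry out this coefficient-matching argument explicitly or switch to the paper's single-error-channel reduction, which renders it immediate.
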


Some observations:
\begin{itemize}
    \item Plainly, the conditions state that the errors in a given coset embed in a nice way into the admissible errors such that the bilinears are preserved. Note it suffices to check \eqref{eq:qetcgeneral1} or \eqref{eq:qetcgeneral2} for a fixed choice of $i$, whilst varying $j$.
    \item The error-transmuting condition for groups is also a \textit{sufficient} condition even when $\ol{M}$ is not necessarily a group. That is:
    \begin{equation}\label{eq:qetstonggeneral}
    E_j^{\dagger}E_k \in (G-N(S)) \cup M \quad \forall j,k,
    \end{equation}
    implies the conditions \eqref{eq:qetcgeneral1}, even when $\ol{M}$ is not a group.

    To see this, suppose \eqref{eq:qetstonggeneral} holds. For each $\mathfrak{n}$, we can define a $\pi_{\mathfrak{n}}$ by $\pi_{\mathfrak{n}}(E_{\mathfrak{n},1}) = I$. Then \eqref{eq:qetstonggeneral} determines $\ol{m}_{\pi_{\fn},j} = \pi_{\fn}(E_{\mathfrak{n},j}) = E_{\fn,1}^\dagger E_{\fn,j} $.

    We will refer to the sufficient conditions \eqref{eq:qetstonggeneral} as \textit{strong} error transmuting conditions, for when $\ol{M}$ is not a group.

    \item If $\ol{M}$ is a group, the QET conditions coincide. That \eqref{eq:qetcgeneral1} implies \eqref{eq:qetgroups} is obvious. The other direction is given by the argument above.

\end{itemize}

\begin{proof}
($\Leftarrow$) The proof of sufficiency is straightforward. It illustrates the recovery protocol for QET, which we show in the necessity proof to be general enough to reproduce all possible probability distributions over admissible errors. To transmute the error channel to an ALNC, one first performs a syndrome measurement, applying projectors $P_{\fn}$ onto error channels whose Kraus operators belong to the same coset of $N(S)$, i.e.\ the set $E_{\mathfrak{n}}$. For each map $\pi_{\fn}$ satisfying the conditions in the theorem, we may define a recovery operator $R_{\pi_{\fn}}$ such that:
\be
R_{\pi_{\fn}} E_{\fn,i} = m_{\pi_{\fn},i}
\ee
with $m_{\pi_{\fn},i}$ defined as above. For example, one may take without loss of generality $R_{\pi_{\fn}} =m_{\pi_{\fn},1}  E_{\fn,1}^{\dagger}$. This is consistent because:
\bea
R_{\pi_{\fn}} E_{\fn,j} &= R_{\pi_{\fn}} E_{\fn,1} E_{\fn,1}^{\dagger}E_{\fn,j} \\
&= m_{\pi_{\fn},j} g
\eea
for some $g \in S$, by \eqref{eq:qetcgeneral2}. Of course $m_{\pi_{\fn},j} g$ acts as $\ol{m}_{\pi_{\fn},j}$ on the code space $S$. If there are multiple $\pi_{\fn}$ possible, then let $R_{\pi_{\fn}} = \sqrt{r_{\pi_{\fn}}} m_{\pi_{\fn},1}  E_{\pi_{\fn},1}^{\dagger}$ be the Kraus operators for the recovery operation for each coset, where $\sum_{\pi_{\fn}} r_{\pi_{\fn}} = 1$. The total recovery operation $\cR$ is then given by:
\be
\cR(\rho) = \sum_{\mathfrak{n},\pi_{\fn}} r_{\pi_{\fn}} m_{\pi_{\fn},1} E_{\mathfrak{n},1}^{\dagger}P_{\mathfrak{n}} \rho P_{\mathfrak{n}} E_{\mathfrak{n},1} m_{\pi_{\fn},1}^{\dagger}.
\ee
Taking a general noise channel
\begin{equation}\label{eq:noisechannelgeneral}
    \cE(\rho) = \sum_{\mathfrak{n},i} p_{\mathfrak{n},i} E_{\mathfrak{n},i} \rho E_{\mathfrak{n},i}^{\dagger},
\end{equation}
then for all $\rho \in \cC$ the combined error and recovery channel is an ALNC:
\be
\cR\circ\cE(\rho) = \sum_{\fn,i} \sum_{\pi_\fn} p_{\fn,i}r_{\pi_\fn} m_{\pi_{\fn},i} \rho \, m_{\pi_{\fn},i}^{\dagger}.
\ee

($\Rightarrow$) Now we show necessity. First note that without loss of generality, the recovery channel can be expressed as a syndrome measurement and then an application of a syndrome-dependent channel. To see this, let $\{R_{\beta}\}$ be the Kraus operators for $\cR$. Then for $\rho \in \cC$:
\begin{align}
\cR \circ \cE(\rho) &= \sum_{\beta}R_{\beta} \left(\sum_{\fm} P_{\fm} \right)  \cE(\rho) \left(\sum_{\fl} P_{\fl} \right) R_{\beta}^{\dagger} \nonumber \\
&= \sum_{\fm}\sum_{\beta} R_{\beta}^{\fm}P_{\fm} \cE(\rho) P_{\fm} {R_{\beta}^{\fm}}^{\dagger}
\end{align}
where we have denoted $R^{\fn}_{\beta} := R_{\beta}P_{\fn}$, and $\cE$ is of the form \eqref{eq:noisechannelgeneral}. We have used $P_{\fm}E_{\fn,i} \rho E_{\fn,i}^{\dagger} P_{\fl} = \delta_{\fn,\fm} \delta_{\fn,\fl}E_{\fn,i} \rho E_{\fn,i}^{\dagger}$, and reordered the summation. The channel applied, conditional on a syndrome measurement of $\fm$, is thus $\cR_{\fn}(\cdot) = \sum_{\beta} R^{\fm}_{\beta} \cdot  {R^{\fm}_{\beta}}^{\dagger}$.

Now the definition of QET requires that $\cR$ is a valid recovery procedure for any probability distribution $\{p_{\fn,i}\}$ of physical errors. Thus by linearity it must correct the error channel consisting of a single error $E_{\fn,i}$. Let us drop the coset label $\fn$ now for readability. We must have, for all $\rho \in \cH$,
\be
\sum_{\beta}R_{\beta} E_i P \rho P E_i^\dagger R_{\beta}^{\dagger} = \sum_{\al} q^i_{\al}m_{\al}P\rho P m_{\al}^\dagger
\ee
for some $q^i_{\al}$. This implies that:
\be
\sum_{\beta} U_{\al\beta } R_{\beta}  E_i P =  \sqrt{q^i_{\al}} m_{\al} P
\ee
where $U$ is unitary, padding with zero operators if necessary. Thus, one can redefine the Kraus operators $\sum_{\beta} U_{\al\beta } R_{\beta} \rightarrow R_{\al}$, so $R_{\al}E_iP = \sqrt{q^i_{\al}} m_{\al}P$.

Now consider another physical error $E_{j}$ in the same coset as $E_i$ (if it exists). One has:
\bea\label{eq:qetcgeneralproof}
R_\al E_j P &= R_{\al} E_i P E_i^{\dagger}E_jP \\
&= \sqrt{q^i_{\al}} m_{\al} E_i^\dagger E_j P,
\eea
noting that $E_i^\dagger E_j \in N(S)$ so commutes with $P$. The same recovery procedure must also transmute the single $E_j$ error channel to an ALNC:
\begin{align}
\sum_{\al}& R_{\al} E_j P \rho P E_{j}^\dagger R_{\al}^\dagger\\
&= \sum_{\al} q^i_{\al}  m_{\al} E_i^\dagger E_j  P \rho P E_j^\dagger E_i m_{\al}^\dagger  \nonumber\\
&= \sum_{\beta} q^{j}_{\beta} m_{\beta}P \rho P m_{\beta}^\dagger
\end{align}
for some $\{q^{j}_{\al}\}$, so there is a unitary $V$ such that:
\be
\sqrt{q^i_{\al}}  m_{\al} E_i^\dagger E_j P  - \sum_{\beta}V_{\al\beta}\sqrt{q^j_\beta} m_{\beta}P = 0.
\ee
The operators appearing in this sum may be regarded as elements of $\ol{\cG}_k$, which are a basis of the vector space of linear operators on $\cC$. Thus:
\be
 m_{\al} E_i^\dagger E_j P = m_{\beta}P \,\, \Rightarrow \,\, E_i^\dagger E_j P = m_{\al}^{\dagger} m_{\beta}P
\ee
for some $\beta$ (up to phase). This holds for all $E_j$ in the same coset of $N(S)$ as $E_i$. One therefore obtains a map $\pi$ as above (omitting the coset label), where $E_i \mapsto m_{\al}$, and $E_j \mapsto m_{\beta}$ for $\beta$ determined as above. 
\end{proof}

Note in the necessity proof the most general form of the recovery protocol (up to unitary equivalence of Kraus operators) was derived. We see it is identical to the protocol derived in the sufficiency part.

Further, as promised, we see that the recovery procedure in section \ref{sec:qetgroups} for when $\ol{M}$ is a group is the most general recovery procedure. This is because, for a group, the set of maps $\pi_{\fn}$ as in theorem \ref{thm:qetgeneral} is in bijection with the set of elements of $\ol{M}$. For a reference $E_{\fn,1}$ in a given coset, for each element $\ol{m}_{\al}$ one can construct a $\pi_{\fn}$ such that $\pi_{\fn}(E_{\fn,1}) = \ol{m}_{\al}$. The map on the remainder of $E_{\fn}$ is then determined by \eqref{eq:qetcgeneral1}. The most general recovery protocol outlined in the proof above then agrees with that outlined in section \ref{sec:qetgroups}.

\subsection{Effective Code Distance} \label{sec:effcodedistance}

In normal error correction, often the goal is to design codes with high code distance $d$, so that the code can perfectly correct errors up to weight $\lfloor \frac{d-1}{2}\rfloor$. This is in accordance with the simple but oft-used model of noise on quantum computers as being independent and identically distributed qubit noise, so that lower weight Pauli errors are more probable.

For an error-transmuting code $\cC$ for admissible errors $\ol{M}$, we define the \textit{effective} code distance $d_{\text{eff}}=2w+1$, where $w$ is the maximum weight such that the set of all errors with weight $\leq w$ obey the error transmuting conditions \eqref{eq:qetcgeneral1}. This requires the existence of an appropriate isomorphism $\cA$. The code $\cC$ is therefore a QET code for the physical error set $E$ of Pauli errors of weight $ \leq w$.

From the \textit{strong} form of the error transmuting conditions \eqref{eq:qetstonggeneral} (or if $\ol{M}$ is a group), it is easy to identify a lower bound:
\begin{equation}
    d_{\text{eff}} \geq \min_{E \in N(S)\backslash M} \text{weight}(E).
\end{equation}
Thus we have that $d_{\text{eff}} \geq d$. In words, if the code $\cC$ has error correcting distance $d$, then as an error transmuting code one can effectively increase the distance to $d_{\text{eff}}$ if $M\subset N(S)$ contains all the errors in $N(S)\backslash S$ with weight in the interval $[d,d_{\text{eff}})$.

Recall that $M = \phi_S^{-1}\circ\cA(\ol{M})$ depends on a choice of isomorphism $\cA : \ol{\cG}_k \rightarrow N(S)/S$. By the above discussion, in producing error transmuting codes, it is often advantageous to make a tactical choice of $\cA$ such that $M$ `cleans up' all the errors of weight $[d,d_{\text{eff}})$, resulting in a QET code of effective distance at least $d_{\text{eff}}$. We will see this in practice momentarily.

\section{Existing Examples}\label{sec:existingexamples}

In this section we describe how some existing error correcting codes in the literature may also be taken to be error transmuting codes.

We first note that some families of codes are easily seen to be examples of QET codes. For example, subsystem codes \cite{kribs2005operator,nielsen2007algebraic}, for which:
\be
\cH = \cA \otimes \cB \oplus \cC^{\perp}.
\ee
That is, the code space is decomposed into subsystems $\cC = \cA \otimes \cB $, and information is stored in $\cA$ whilst $\cB$ is acted on by a gauge group $\cG_{\text{gauge}}$ which is treated as a redundancy. In our formalism, subsystem codes are simply QET codes with an admissible error group which decomposes as a tensor product: $M = I_{\cA} \otimes \cG_{\text{gauge}}$ consisting of \textit{all} operators in $N(S)$ which act as identity on the subsystem $\cA$. The study of subsystem codes is already deep and varied, so we do not focus on them here, instead considering admissible error sets which do not form such a gauge group.

Trivially, any CSS code is also an error transmuting code if one decides on allowing the group of logical phase errors or the group of logical bit flip errors, see the discussion in section \ref{sec:CSScodes}. This includes of course codes which are essentially just classical codes, such as the repetition code.

\subsection{Fermionic Encodings}

As mentioned in the introduction, one of the most fruitful areas we predict error transmutation to be applicable to is the digital simulation of quantum systems which are subject to `natural' noise, and in which a certain degree of that noise is tolerated. In these situations, it may be possible to encode the system in a way that physical errors are transmuted to natural (admissible logical) errors. This may reduce the overhead of the encoding and could even be desirable if the goal is to simulate a noisy system.

Digital simulation of fermionic many-body systems requires a mapping of the fermionic Fock space, and algebra, to the qubit Hilbert space and Pauli algebra. The result of the mapping is often a stabiliser code, where the fermionic Fock space is realised as the code space. The fundamental trade-off in fermionic encodings is that the non-local statistics manifest either in non-local qubit operators or in long-range entangled states.

In the latter approach, the fermionic operators are low-weight, making their application less costly. However, as they are logical operators, this results in a low code distance. In \cite{bausch2020mitigating} it was demonstrated that despite this, some of these codes retain valuable error mitigating properties. In particular, for the Verstraete-Cirac encoding \cite{verstraete2005mapping} and the compact encoding of Derby \& Klassen \cite{derby2021compact}, despite having undetectable weight one logical operators (and therefore a code distance of~1), these undetectable errors corresponding to dephasing operators on the fermionic system. It was shown that phase noise occurs naturally in a lattice of fermions via the coupling to phonons arising from placement in a thermal bosonic bath.

In this work, we show that one can extend this error detecting property to an error transmuting property. That is, if one accepts single dephasing errors on the simulated fermions as an admissible logical error set, then one can transmute all single qubit Pauli errors on the physical qubits. The distance of the code is enlarged to the effective distance $d_{\text{eff}}=3$.

For the sake of brevity, we will demonstrate the above only for the compact encoding. Consider a square lattice with fermionic sites at the vertices,\footnote{We disregard boundary effects in this paper, leaving the details of this to future work.} labelled by $j$. The corresponding qubit system is described as follows. Take the same lattice but with faces labelled in a checker-board even-odd manner. A vertex qubit is associated to each fermionic site, and a face qubit to the odd faces. The edges of the lattice are given an orientation circulating clockwise and anticlockwise around faces on alternating rows. The unique odd face adjacent to an edge $(i,j)$ is denoted $f(i,j)$. The arrangement is displayed in figure \ref{fig:compact_encoding}.

The even fermionic operator algebra\footnote{All physical fermionic observables are even fermionic operators by parity superselection.} is generated by edge and vertex operators:
\begin{equation}
    E_{jk} := -i\gamma_j\gamma_k,\quad V_j := -i\gamma_j\ol{\gamma}_j,
\end{equation}
where $\gamma_j, \ol{\gamma}_j$ are Majorana operators associated to the fermion at site $j$. The encoding maps these operators to:
\begin{equation*}
        \tilde{E}_{jk} := \begin{cases}
        \,\,\,\,\,X_jY_k X_{f(j,k)}\quad &(j,k) \text{ downwards},\\
        -X_jY_k X_{f(j,k)}\quad &(j,k) \text{ upwards},\\
        \,\,\,\,\,X_jY_k Y_{f(j,k)}\quad &(j,k) \text{ horizontal}\\
    \end{cases}
\end{equation*}
\begin{equation}
    \\
    \tilde{E}_{kj} := -E_{jk} \qquad \tilde{V}_j := Z_j.
\end{equation}
The mapped operators obey the same local (anti)commutation relations as the fermionic operators. There is also a non-local relation in that any product of a loop of edge operators must equal the identity (up to a phase). In the encoding, this is enforced by projecting to the simultaneous $+1$ eigenspace of all loops of encoded edge operators, which are the stabilisers of the code. It is sufficient to do so for the minimal generating set of loops given by those around each face. The odd faces give trivial stabilisers, whilst those associated to even faces are non-trivial and are given in figure \ref{fig:compact_encoding}.

\begin{figure}[ht!]
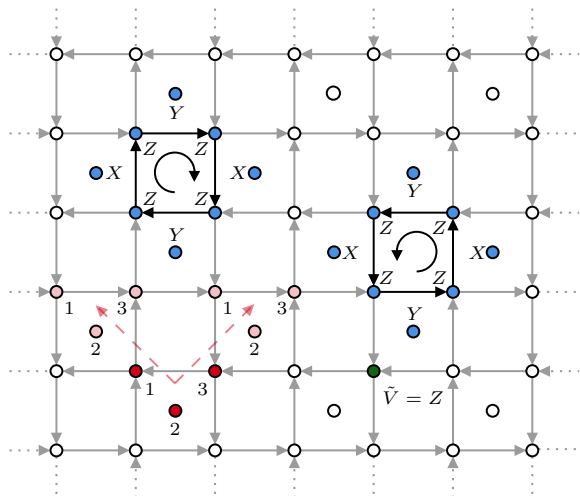

    \centering

\tikzset{every picture/.style={line width=0.75pt}} 


    \caption{The compact encoding. Two stabilisers are pictured (in blue), as well as the unit cell and its translations (in red). The admissible errors are $Z$ errors on the vertex qubits, one of which is displayed (in green).}
    \label{fig:compact_encoding}
\end{figure}

The fermionic dephasing operator is the vertex operator $V_j$. Thus the error transmuting property we claim is precisely that all single qubit Pauli errors may be transmuted to errors in the same stabiliser coset as $\tilde{V}_j$.

This is easy to check. We note that all qubits can be classified as one of three distinct types (indicated in red in figure \ref{fig:compact_encoding}). The grouping of three may be considered a unit cell, which is then tiled along axes at $\pm 45^{\circ}$ to the lattice. From this rotated perspective, the qubit arrangement is known as the Lieb (or decorated square) lattice.\footnote{One can show that the compact encoding is equivalent up to single qubit Cliffords to a translation invariant code in the Lieb lattice \cite{chien2022optimizing}. One can then check the error transmuting properties described in the main text explicitly using the algebraic description of such codes, see section \ref{sec:tileable}.}

Within a given unit cell, the \textit{non-zero} syndrome cosets are: $\{X_1, Y_1\}$, $\{X_2\}$, $\{Y_2\}$, $\{Z_2\}$, $\{X_3,Y_3\}$. Both $Z_1$ and $Z_3$ have zero syndromes. Further, the syndromes for errors with non-zero syndrome in any two different unit cells are distinct; they induce different stabiliser measurements. Assuming a single qubit Pauli error occurs, there are three possibilities.
\begin{itemize}
    \item The syndrome measurement returns zero, indicating either no error, or an error of type $Z_1$ or $Z_3$. These errors are already of the admissible type $\tilde{V}_j$.
    \item The unique syndrome of any Pauli error on a qubit of type $2$ is measured. These errors may be corrected exactly.
    \item A syndrome indicating an $X$ or $Y$ error on a qubit of type $1$ or $3$ has occurred. A channel probabilistically applying $X$ and $Y$ (or deterministically either) will transmute the errors to $Z$ on the same qubit, which are admissible.
\end{itemize}
We conclude the existence of the error transmuting property as desired.

\subsection{Topological Codes}

It is intuitively clear that having a precise description of the form of the logical operators of a code aids significantly in the design of error transmuting codes. A large class of examples for which this is true are topological codes.

In the interest of brevity, we restrict our discussion to topological codes associated with a 2d Riemann surface $\cM$.\footnote{We will be brief in our exposition of these codes. See e.g.\ \cite{bombin2013introduction} for detailed explanations.} These are specified by the $\bZ_2$ homology of the surface, and depend on a specific cellulation. The $0,1,2$-cells are the vertices, edges and faces respectively. A qubit is assigned to each $1$-cell (equivalently the $1$-cocell intersecting it). Tensor products $Z(c)$ of Pauli $Z$ operators are associated with $1$-chains $\{c\}$, and $X$ operators with $1$-cochains $\{\tilde{c}\}$.

The stabiliser generators consist of:
\begin{itemize}
    \item Plaquette generators $Z(\partial c)$ defined by the boundary of a $2$-cell $c$.
    \item Vertex generators $X(\tilde{\partial}\tilde{c})$ defined by coboundaries of $0$-cocells $\tilde{c}$.
\end{itemize}
Here $\partial$ and $\tilde{\partial}$ are the boundary and coboundary operators in the (co)chain complexes:
\begin{align}
     &C_2 \xrightarrow[]{\partial_2} C_1 \xrightarrow[]{\partial_1} C_0  \\
    &C^0 \xrightarrow[]{\tilde\partial^0} C^1 \xrightarrow[]{\tilde\partial^1} C^2
\end{align}
where $C_i$ and $C^i$ are the $i$-chains and cochains respectively. They require a careful definition in the case when $\cM$ is a manifold with boundary.

The logical operators $N(S)$ are generated by the elements of $Z_1 := \text{ker} \,\partial_1$ and $Z^1 := \text{ker} \,\tilde\partial^1$, and the equivalence classes $N(S)/S$ by the (co)homology groups $H_1 : = Z_1/\text{im}\, \partial_2$ and $H^1 : = Z^1/\text{im}\, \tilde\partial^0$. The number of encoded logical qubits is equal to the first Betti number $\beta_1 = \text{dim}\, H_1$.

For elements $[c]$ of $H_1$ denote:
\begin{equation}
    d([c]) = \min_{c' \in Z_1 | [c']=[c]} \text{weight}(c'),
\end{equation}
and similarly for $H^1$. There is a labelling of elements of $H_1$  $[c_i]$, $i=1,\ldots,2^{\beta_1}$, where if $d_i:=d([c_i])$ is the minimum weight of an operator in the same homology class of $c_i$ (that is, in the same stabiliser coset in $N(S)/S$), then $0=d_1 \leq d_2 \leq\ldots\leq d_N$. There is similarly a labelling of elements $\tilde{c}^i$ of $H^1$, and a sequence $0=d^1 \leq d^2 \leq\ldots\leq d^N$. Note $c_1$ and $\tilde{c}^1$ correspond to the identity operator, hence $d_1=d^1=0$.

The distance of the code as a regular error correcting code is simply $\text{min}(d_2, d^2)$. However, if there exists a choice of identification $N(S)/S \cong \ol{G}_k$ with the logical Pauli group such that the admissible logical errors are precisely those errors $c_1, c_2,\ldots, c_j$ and $\tilde{c}_1, \tilde{c}_2,\ldots, \tilde{c}_{j'}$ for some $j, j'$, then the code is able to transmute all errors of weight $\lfloor \frac{\min(d_j,d_{j'})-1}{2}\rfloor $ to that admissible error set.

Although much of the discussion above holds more generally, for topological codes the ordering $c_1,c_2,\ldots$ often interacts nicely with the weights of these operators on the \textit{logical} qubits. For example, for the toric code \cite{Kitaev:1997wr} based on an $L\times L$ periodic square lattice, $\ol{Z}_1$ and $\ol{Z}_2$ are identified with the two loops giving the canonical basis of $H_1 = \bZ_2 \oplus \bZ_2$ and one has that $d(\ol{Z}_1)=d(\ol{Z}_2)=L$, $d(\ol{Z}_1\ol{Z}_2)=2L$. Similar statements hold for the $\ol{X}$ operators. One can easily see how this argument generalises to higher genus $g$ Riemann surfaces, or surfaces with boundary.

Taking lattices which are not self-dual results in asymmetric CSS codes (see section \ref{sec:CSScodes}). Supplemented with the geometric origin of logical operators, this can result in nice error transmuting codes. For example, take the toric code with a cellulation given by an $L \times L$ hexagonal lattice, where $L$ is even. The dual cellulation is the triangular lattice. If $\ol{Z}_1$ and $\ol{Z}_2$ are identified with the generators of $H_1 = \bZ_2 \oplus \bZ_2$ and $\ol{X}_1$ and $\ol{X}_2$ with those of $H^1 = \bZ_2 \oplus \bZ_2$, we have:
\bea
    &d(\ol{Z}_1)=d(\ol{Z}_2)=2L,\quad d(\ol{Z}_1\ol{Z}_2) = 3L,\\
    &d(\ol{X}_1)=d(\ol{X}_2)=L,\quad d(\ol{X}_1\ol{X}_2) = \frac{3}{2}L.
\eea
If we accepted single logical qubit bit-flip errors, then the effective distance of the code is increased from $L$ to $d_{\text{eff}}=\frac{3}{2}L$. That is, the code can transmute Pauli errors of weight $\lfloor \frac{3}{4}L-\frac{1}{2} \rfloor$ to $\{\ol{X}_1,\ol{X}_2\}$.

\section{New Examples}\label{sec:newexamples}

In this section, we report the existence of some novel quantum error transmuting codes which do not fall into the above classes. For simplicity, we will primarily take the physical error set to consist of Pauli errors up to a certain weight, and the admissible logical set to be single (logical) qubit Pauli phase errors. Apart from the low qubit examples, the codes below are not necessarily optimal in their parameters, but hopefully serve as a proof of concept.

\subsection{Low Qubit Examples}\label{sec:lowqubitexample}

\subsubsection*{Group Case}

The stabiliser of a code encoding two qubits in seven is given in table \ref{tab:7qubitcode}. It transmutes all single qubit Pauli errors to the group generated by a logical phase error $\ol{Z}_1$ on one of the logical qubits.

\begin{table}[ht!]
\begin{center}
\begin{tabular}{ L | L L L L L L L}
 & X & X & Y & Y & Z & I & Z \\
 &I & Z & X & Y & Y & X & Y \\
S &I & I & I & I & I & Z & Z \\
 &Z & Z & I & I & Z & I & Z \\
 &Z & Z & Z & Z & I & I & I \\
 \hline\rule{0pt}{1\normalbaselineskip}
\ol{X}_1 &I & X & X & I & X & I & I \\
\ol{X}_2 &I & I & X & X & I & I & Z \\
 \hline\rule{0pt}{1\normalbaselineskip}
 \ol{Z}_1 &Z & Z & I & I & I & I & I \\
\ol{Z}_2  &Z & I & I & Z & I & I & Z \\
\end{tabular}
\caption{Stabilisers and logical Pauli operators for the seven qubit code.}\label{tab:7qubitcode}
\end{center}
\end{table}

This code was found by searching over the standard form of stabiliser codes \cite{Gottesman:1997zz, Cleve:1996pp}, which expresses the stabiliser generators as a $(n-k) \times 2n$ matrix over $\mathbb{F}_2$.  The standard form also outputs a choice of generators of $N(S)/S$ which are identified with those of the logical Pauli group $\bar{\cG}_k$ up to symplectic automorphism (i.e.\ conjugation by Clifford unitary). If one wanted to search for codes which transmuted single qubit Paulis to a single logical phase error on one of the qubits, one could therefore search for codes such that all two qubit errors in $N(S)$ are contained in a single coset within $N(S)/S$.

For the above code, all single qubit Pauli operators have non-zero syndrome,\footnote{Thus, the code may also be used as a normal error detecting code. One can prove \cite{Grassl:codetables} that as an error detecting/correcting code, this is the best distance achievable for a $[7,2]$ code.} and the only weight two elements of $N(S)$ are $\{Z_1Z_2,\, Z_3Z_4, \, Z_5Z_6,\, Z_5 Z_7\}$. These are all contained in the coset $\ol{Z}_1 S$ of $S$ in $N(S)$. The QET conditions \eqref{eq:qetgroups} imply the code transmutes all single qubit Pauli errors to logical phase errors $\ol{Z}_1$ on one of the logical qubits. Thus the effective distance of the code is increased to $d_{\text{eff}}=3$. Note the code with $S \cup \{\ol{Z}_1\}$ as stabiliser is an error correcting $[7,1,3]$ code.

\subsubsection*{General Case}

We now give an example of a code which encodes two qubits in six, which transmutes all single qubit Pauli errors to the set $\{I, \ol{Z}_1, \ol{Z}_2\}$ which does not form a group. The generators and logical operators are given in table \ref{tab:6qubitcode}.

\begin{table}[ht!]
\begin{center}
\begin{tabular}{L| L L L L L L }
\multirow{4}{1em}{$S$} & Y & X & Z & I & X & X \\
& Z & I & X & X & X & X \\
& Z & Z & Z & Z & I & I \\
& Z & Z & I & I & Z & Z \\\hline
\rule{0pt}{1\normalbaselineskip}
\ol{X}_1& I & X & I & X & X & I \\
\ol{X}_2& Z & I & Z & I & I & Z \\\hline
\rule{0pt}{1\normalbaselineskip}
\ol{Z}_1& Z & Z & I & I & I & I \\
\ol{Z}_2& I & I & I & I & X & X
\end{tabular}
\caption{Stabilisers and logical Pauli operators for the six qubit code.}\label{tab:6qubitcode}
\end{center}
\end{table}

All single qubit Pauli operators have non-zero syndrome, so the code is distance two as an error correcting (detecting) code. The weight two elements of $N(S)$ are $Z_1Z_2$, $Z_3Z_4$, $Z_5Z_6$, $X_5X_6$ and $Y_5Y_6$. The only sets of single qubit errors which share the same coset of $S$ are clearly those which appear in the products above. For these, we have:
\begin{equation}
\begin{aligned}
Z_1\cdot Z_2|_{\mathcal{C}} &= I \cdot \ol{Z}_1,  \\
Z_3\cdot Z_4|_{\mathcal{C}} &= I \cdot \ol{Z}_1,  \\
Z_5\cdot Z_6|_{\mathcal{C}} &= I \cdot \ol{Z}_1,
\end{aligned}
\quad\,\,
\arraycolsep=1.4pt
\begin{array}{cl}
  X_5 \cdot X_6|_{\mathcal{C}}& = I \cdot \ol{Z}_2 ,\\
   Y_5 \cdot Y_6|_{\mathcal{C}}   & =\ol{Z}_1 \cdot \ol{Z}_2,
\end{array}
\end{equation}
which explicitly demonstrates the fulfilment of the error transmuting conditions in theorem \ref{thm:qetgeneral}, and implicitly a set of the required maps $\{\pi_{\mathfrak{n}}\}$ in equation \eqref{eq:qetcgeneral1}. Thus, we have $d_{\text{eff}}=3$.

This code was found via searching over codes in standard form. The logical operators have been re-identified from the output of the standard form using the freedom from applying symplectic automorphisms. The columns (physical qubits) have been reordered. We note that, also via brute force search, a five or fewer physical qubit code with the same number of logical qubits, error transmuting property, and a minimum distance of two as an error-detecting code does not exist.

\subsection{Tileable Codes}\label{sec:tileable}

It is natural to search for QET codes which are tileable, i.e.\ possessing translational symmetry. One reason for this is that qubits are often arranged in a tiled pattern in some quantum computing architectures. Translation invariant codes may be conveniently described in terms of a unit cell, which can then be tiled as one wishes. The Laurent polynomial formalism introduced by Haah \cite{haahthesis,haah2016algebraic} provides a natural framework for this, and was previously applied to an automated search for fermionic encodings tailored to specific qubit layouts and connectivities in \cite{chien2022optimizing}.

For simplicity, we take the setup to be a 2D Euclidean lattice of sites, where there are $n$ qubits per site. A Pauli operator can be described in the following way. We define the Laurent polynomial ring $R := \bF_2[x^{\pm1},y^{\pm1}]$, and consider the module of $2n$-vectors
\begin{equation}
    R^{2n} = \left\{a= \sum_{\mathbf{k} \in \bZ^2} a_k x^{k_1} y^{k_2} \,: a_\mathbf{k} \in \bF_2^{2n}\right\},
\end{equation}
endowed with a conjugation operation
\begin{equation}
a^\dagger = \sum_{\mathbf{k} \in \bZ^2} a_k^T x^{-k_1} y^{-k_2}.
\end{equation}
The associated Pauli operator is given by:
\begin{equation}
    P(a) := \sum_{\mathbf{k} \in \bZ^2} \prod_{i=1}^n X_{i, \mathbf{k}}^{a_\bk[i]}  \prod_{i=n+1}^{2n} Z_{i, \mathbf{k}}^{a_\bk[i]},
\end{equation}
where e.g.\ $X_{i,\bk}$ denotes the Pauli $X$ operator on qubit $i$ (out of $n$) on  the $\bk^{\text{th}}$ lattice site. The translation of $T_{\bk}P(a)$ of $P(a)$ by $\bk$ in the lattice corresponds to a multiplication of $a$ by $x^k_1y^k_2$. The (anti-)commutation relations are given by:
\begin{equation}
    P(a)P(b) = (-)^{w(a,b)}P(b)P(a),
\end{equation}
where $w : R^{2n} \times R^{2n} \rightarrow R$ is the symplectic form
\begin{equation}
    w(a,b) := (a^{\dagger} \Lambda b)_{\mathbf{0}}\quad\,\, \Lambda = \begin{pmatrix} 0 & I_n \\ I_n & 0 \end{pmatrix},
\end{equation}
where the $\mathbf{0}$ subscript denotes the constant term in an element of $R$. It is easy to see the coefficient of $x^k_1y^k_2$ in $(a^{\dagger} \Lambda b)$ yields the commutation relation of $P(a)$ and a translation by $\mathbf{k}$ of $P(b)$, or a translation of $-\mathbf{k}$ of $P(a)$ with $P(b)$.

A translation invariant stabiliser is specified by $(n-k)$ elements $\{g_j\}$ of $R^{2n}$, which can be grouped in a $2n \times (n-k)$ $R$-valued matrix $\sigma$. The elements of the stabiliser are $P(a)$ for all $a$ in the $R$-submodule of $R^{2n}$ generated by $\{g_j\}$, which correspond to translations and multiplications of the Pauli operators $\{P(g_j)\}$. In order for all elements of the stabiliser to be mutually commutative, $\sigma$ must satisfy $\sigma^\dagger \Lambda \sigma  = 0$.

Further, define $\ep:=\sigma^\dagger \Lambda$. Then there is a cochain complex:
\begin{equation}
    R^{n-k}\xrightarrow[]{\sigma} R^{2n} \xrightarrow[]{\ep} R^{n-k},
\end{equation}
and $\text{ker}\, \ep$ defines those elements of $a$ such that $P(a) \in N(S)$, and the cohomology $\text{ker}\, \ep/ \text{Im}\,\sigma \cong N(S)/S$.

Let us regard the $2n$ columns $\{f_i\}$ of $\ep$ as elements of the $R$-module $R^{n-k}$. The logical operators thus correspond to elements of $R^{2n}$, with column entries $a^i=\sum_{\bk} a_{\bk}[i]x^{k_1}y^{k_2}$, such that $\sum_{i}f_i a^i = 0$. The logical operators therefore correspond to elements of the \textit{syzygy} module of the module generated by $\{f_i\}$. Note that one can in fact restrict to $R = \bF_2[x,y]$, i.e.\ regular polynomials, as it is not difficult to prove that any element of the syzygy module over $\bF_2[x,y]$ can be translated to one over $\bF_2[x^{\pm1},y^{\pm1}]$ and vice versa.

The algorithm for determining a generating set of the syzygy module may be found e.g.\ in \cite{cox2006using}. It involves computing a Gröbner basis of $\langle \{f_i\} \rangle$ using a modified version of Buchberger's algorithm. We do not include any more detail here (see e.g.\ appendix E of \cite{chien2022optimizing}), instead simply stating the results of the algorithm when required.

\subsubsection*{Example}

In the example of a QET code here, there are three physical qubits and one stabiliser per unit cell, leading to an encoding rate of $\frac{2}{3}$ (there are two logical qubits per unit cell). The stabilisers act locally in $2\times2$ subsets of unit cells, and are specified by the generator (displayed graphically in figure \ref{fig:translation_stabiliser}):
\begin{equation}\label{eq:transcodegenerator}
    g = \begin{pmatrix}
        xy \\ y+xy \\ x+xy \\ x+y \\ 1+x+xy \\ 1+y+xy
    \end{pmatrix}\quad
    \begin{matrix}
        1 \\ 2 \\ 3 \\ \hline 1 \\ 2 \\ 3
    \end{matrix}
\end{equation}

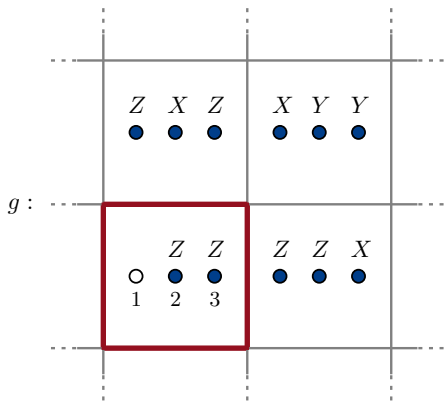
\begin{figure}[ht!]
    \centering
    \tikzset{every picture/.style={line width=0.75pt}} 
\scalebox{0.9}{
   \begin{tikzpicture}[x=0.55pt,y=0.55pt,yscale=-1,xscale=1]

\draw [color={rgb, 255:red, 128; green, 128; blue, 128 }  ,draw opacity=1 ][line width=1]    (110,90) -- (330,90) ;
\draw [color={rgb, 255:red, 148; green, 16; blue, 31 }  ,draw opacity=1 ][line width=2.25]    (110,310) -- (220,310) ;
\draw [color={rgb, 255:red, 128; green, 128; blue, 128 }  ,draw opacity=1 ][line width=1]    (330,90) -- (330,310) ;
\draw [color={rgb, 255:red, 128; green, 128; blue, 128 }  ,draw opacity=1 ][line width=1]    (220,90) -- (220,200) ;
\draw [color={rgb, 255:red, 128; green, 128; blue, 128 }  ,draw opacity=1 ][line width=1]    (110,90) -- (110,200) ;
\draw [color={rgb, 255:red, 128; green, 128; blue, 128 }  ,draw opacity=1 ][line width=1]  [dash pattern={on 1.69pt off 2.76pt}]  (110,50) -- (110,70) ;
\draw [color={rgb, 255:red, 128; green, 128; blue, 128 }  ,draw opacity=1 ][line width=1]    (110,70) -- (110,90) ;

\draw  [fill={rgb, 255:red, 0; green, 62; blue, 138 }  ,fill opacity=1 ] (190,145) .. controls (190,142.24) and (192.24,140) .. (195,140) .. controls (197.76,140) and (200,142.24) .. (200,145) .. controls (200,147.76) and (197.76,150) .. (195,150) .. controls (192.24,150) and (190,147.76) .. (190,145) -- cycle ;
\draw  [fill={rgb, 255:red, 0; green, 62; blue, 138 }  ,fill opacity=1 ] (160,145) .. controls (160,142.24) and (162.24,140) .. (165,140) .. controls (167.76,140) and (170,142.24) .. (170,145) .. controls (170,147.76) and (167.76,150) .. (165,150) .. controls (162.24,150) and (160,147.76) .. (160,145) -- cycle ;
\draw  [fill={rgb, 255:red, 0; green, 62; blue, 138 }  ,fill opacity=1 ] (130,145) .. controls (130,142.24) and (132.24,140) .. (135,140) .. controls (137.76,140) and (140,142.24) .. (140,145) .. controls (140,147.76) and (137.76,150) .. (135,150) .. controls (132.24,150) and (130,147.76) .. (130,145) -- cycle ;
\draw  [fill={rgb, 255:red, 0; green, 62; blue, 138 }  ,fill opacity=1 ] (300,145) .. controls (300,142.24) and (302.24,140) .. (305,140) .. controls (307.76,140) and (310,142.24) .. (310,145) .. controls (310,147.76) and (307.76,150) .. (305,150) .. controls (302.24,150) and (300,147.76) .. (300,145) -- cycle ;
\draw  [fill={rgb, 255:red, 0; green, 62; blue, 138 }  ,fill opacity=1 ] (270,145) .. controls (270,142.24) and (272.24,140) .. (275,140) .. controls (277.76,140) and (280,142.24) .. (280,145) .. controls (280,147.76) and (277.76,150) .. (275,150) .. controls (272.24,150) and (270,147.76) .. (270,145) -- cycle ;
\draw  [fill={rgb, 255:red, 0; green, 62; blue, 138 }  ,fill opacity=1 ] (240,145) .. controls (240,142.24) and (242.24,140) .. (245,140) .. controls (247.76,140) and (250,142.24) .. (250,145) .. controls (250,147.76) and (247.76,150) .. (245,150) .. controls (242.24,150) and (240,147.76) .. (240,145) -- cycle ;
\draw  [fill={rgb, 255:red, 0; green, 62; blue, 138 }  ,fill opacity=1 ] (300,255) .. controls (300,252.24) and (302.24,250) .. (305,250) .. controls (307.76,250) and (310,252.24) .. (310,255) .. controls (310,257.76) and (307.76,260) .. (305,260) .. controls (302.24,260) and (300,257.76) .. (300,255) -- cycle ;
\draw  [fill={rgb, 255:red, 0; green, 62; blue, 138 }  ,fill opacity=1 ] (270,255) .. controls (270,252.24) and (272.24,250) .. (275,250) .. controls (277.76,250) and (280,252.24) .. (280,255) .. controls (280,257.76) and (277.76,260) .. (275,260) .. controls (272.24,260) and (270,257.76) .. (270,255) -- cycle ;
\draw  [fill={rgb, 255:red, 0; green, 62; blue, 138 }  ,fill opacity=1 ] (240,255) .. controls (240,252.24) and (242.24,250) .. (245,250) .. controls (247.76,250) and (250,252.24) .. (250,255) .. controls (250,257.76) and (247.76,260) .. (245,260) .. controls (242.24,260) and (240,257.76) .. (240,255) -- cycle ;
\draw  [fill={rgb, 255:red, 0; green, 62; blue, 138 }  ,fill opacity=1 ] (190,255) .. controls (190,252.24) and (192.24,250) .. (195,250) .. controls (197.76,250) and (200,252.24) .. (200,255) .. controls (200,257.76) and (197.76,260) .. (195,260) .. controls (192.24,260) and (190,257.76) .. (190,255) -- cycle ;
\draw  [fill={rgb, 255:red, 0; green, 62; blue, 138 }  ,fill opacity=1 ] (160,255) .. controls (160,252.24) and (162.24,250) .. (165,250) .. controls (167.76,250) and (170,252.24) .. (170,255) .. controls (170,257.76) and (167.76,260) .. (165,260) .. controls (162.24,260) and (160,257.76) .. (160,255) -- cycle ;
\draw   (130,255) .. controls (130,252.24) and (132.24,250) .. (135,250) .. controls (137.76,250) and (140,252.24) .. (140,255) .. controls (140,257.76) and (137.76,260) .. (135,260) .. controls (132.24,260) and (130,257.76) .. (130,255) -- cycle ;
\draw [color={rgb, 255:red, 148; green, 16; blue, 31 }  ,draw opacity=1 ][line width=2.25]    (110,200) -- (220,200) ;
\draw [color={rgb, 255:red, 148; green, 16; blue, 31 }  ,draw opacity=1 ][line width=2.25]    (220,200) -- (220,310) ;
\draw [color={rgb, 255:red, 148; green, 16; blue, 31 }  ,draw opacity=1 ][line width=2.25]    (110,200) -- (110,310) ;
\draw [color={rgb, 255:red, 128; green, 128; blue, 128 }  ,draw opacity=1 ][line width=1]    (220,200) -- (330,200) ;
\draw [color={rgb, 255:red, 128; green, 128; blue, 128 }  ,draw opacity=1 ][line width=1]    (220,310) -- (330,310) ;
\draw [color={rgb, 255:red, 128; green, 128; blue, 128 }  ,draw opacity=1 ][line width=1]  [dash pattern={on 1.69pt off 2.76pt}]  (220,50) -- (220,70) ;
\draw [color={rgb, 255:red, 128; green, 128; blue, 128 }  ,draw opacity=1 ][line width=1]    (220,70) -- (220,90) ;

\draw [color={rgb, 255:red, 128; green, 128; blue, 128 }  ,draw opacity=1 ][line width=1]  [dash pattern={on 1.69pt off 2.76pt}]  (330,50) -- (330,70) ;
\draw [color={rgb, 255:red, 128; green, 128; blue, 128 }  ,draw opacity=1 ][line width=1]    (330,70) -- (330,90) ;

\draw [color={rgb, 255:red, 128; green, 128; blue, 128 }  ,draw opacity=1 ][line width=1]  [dash pattern={on 1.69pt off 2.76pt}]  (370,90) -- (350,90) ;
\draw [color={rgb, 255:red, 128; green, 128; blue, 128 }  ,draw opacity=1 ][line width=1]    (350,90) -- (330,90) ;

\draw [color={rgb, 255:red, 128; green, 128; blue, 128 }  ,draw opacity=1 ][line width=1]  [dash pattern={on 1.69pt off 2.76pt}]  (370,200) -- (350,200) ;
\draw [color={rgb, 255:red, 128; green, 128; blue, 128 }  ,draw opacity=1 ][line width=1]    (350,200) -- (330,200) ;

\draw [color={rgb, 255:red, 128; green, 128; blue, 128 }  ,draw opacity=1 ][line width=1]  [dash pattern={on 1.69pt off 2.76pt}]  (370,310) -- (350,310) ;
\draw [color={rgb, 255:red, 128; green, 128; blue, 128 }  ,draw opacity=1 ][line width=1]    (350,310) -- (330,310) ;

\draw [color={rgb, 255:red, 128; green, 128; blue, 128 }  ,draw opacity=1 ][line width=1]  [dash pattern={on 1.69pt off 2.76pt}]  (110,350) -- (110,330) ;
\draw [color={rgb, 255:red, 128; green, 128; blue, 128 }  ,draw opacity=1 ][line width=1]    (110,330) -- (110,310) ;

\draw [color={rgb, 255:red, 128; green, 128; blue, 128 }  ,draw opacity=1 ][line width=1]  [dash pattern={on 1.69pt off 2.76pt}]  (220,350) -- (220,330) ;
\draw [color={rgb, 255:red, 128; green, 128; blue, 128 }  ,draw opacity=1 ][line width=1]    (220,330) -- (220,310) ;

\draw [color={rgb, 255:red, 128; green, 128; blue, 128 }  ,draw opacity=1 ][line width=1]  [dash pattern={on 1.69pt off 2.76pt}]  (330,350) -- (330,330) ;
\draw [color={rgb, 255:red, 128; green, 128; blue, 128 }  ,draw opacity=1 ][line width=1]    (330,330) -- (330,310) ;

\draw [color={rgb, 255:red, 128; green, 128; blue, 128 }  ,draw opacity=1 ][line width=1]  [dash pattern={on 1.69pt off 2.76pt}]  (70,90) -- (90,90) ;
\draw [color={rgb, 255:red, 128; green, 128; blue, 128 }  ,draw opacity=1 ][line width=1]    (90,90) -- (110,90) ;

\draw [color={rgb, 255:red, 128; green, 128; blue, 128 }  ,draw opacity=1 ][line width=1]  [dash pattern={on 1.69pt off 2.76pt}]  (70,200) -- (90,200) ;
\draw [color={rgb, 255:red, 128; green, 128; blue, 128 }  ,draw opacity=1 ][line width=1]    (90,200) -- (110,200) ;

\draw [color={rgb, 255:red, 128; green, 128; blue, 128 }  ,draw opacity=1 ][line width=1]  [dash pattern={on 1.69pt off 2.76pt}]  (70,310) -- (90,310) ;
\draw [color={rgb, 255:red, 128; green, 128; blue, 128 }  ,draw opacity=1 ][line width=1]    (90,310) -- (110,310) ;

\draw  [color={rgb, 255:red, 148; green, 16; blue, 31 }  ,draw opacity=1 ][line width=1.5]  (110,200) -- (220,200) -- (220,310) -- (110,310) -- cycle ;

\draw (127,116) node [anchor=north west][inner sep=0.75pt]    {$Z$};
\draw (187,116) node [anchor=north west][inner sep=0.75pt]    {$Z$};
\draw (157,116) node [anchor=north west][inner sep=0.75pt]    {$X$};
\draw (237,116) node [anchor=north west][inner sep=0.75pt]    {$X$};
\draw (297,116) node [anchor=north west][inner sep=0.75pt]    {$Y$};
\draw (267,116) node [anchor=north west][inner sep=0.75pt]    {$Y$};
\draw (237,226) node [anchor=north west][inner sep=0.75pt]    {$Z$};
\draw (297,226) node [anchor=north west][inner sep=0.75pt]    {$X$};
\draw (267,226) node [anchor=north west][inner sep=0.75pt]    {$Z$};
\draw (187,226) node [anchor=north west][inner sep=0.75pt]    {$Z$};
\draw (157,226) node [anchor=north west][inner sep=0.75pt]    {$Z$};
\draw (129,265) node [anchor=north west][inner sep=0.75pt]    {\small $1$};
\draw (159,265) node [anchor=north west][inner sep=0.75pt]    {\small $2$};
\draw (189,265) node [anchor=north west][inner sep=0.75pt]    {\small $3$};

\draw (35,194) node [anchor=north west][inner sep=0.75pt]  {$g:$};

\end{tikzpicture}
}
    \caption{ The stabiliser associated to the unit cell (highlighted) for the translation invariant code. The qubit numbering within a unit cell matches that in \eqref{eq:transcodegenerator}.}
    \label{fig:translation_stabiliser}
\end{figure}

We now demonstrate that this code transmutes all single qubit Pauli errors errors to single qubit phase errors on the logical qubits, so that the code has an effective distance of $d_{\text{eff}}=3$.

Firstly, that $N(S)$ contains no single qubit errors may be verified by checking that all single qubit errors within the unit cell do not lie in $\text{ker}\,\ep$. Now, any two qubit error in $N(S)$ must necessarily be supported on a $2\times2$ subset of unit cells. This is because, assuming the contrary, either of the two single qubit errors comprising it would trigger a non-zero syndrome measurement for a $2\times2$ stabiliser containing that single qubit error.

The only two qubit logical operator is $(0,1,1,0,1,1)$, and its $x,y$ translations. It is interesting to ask if there is a generating set of the logical operators, $\text{ker} \ep$, comprising of $\{\ol{Z}_1,\ol{X}_1,\ol{Z}_2,\ol{X}_2\} \cup g$, where:
\begin{itemize}
    \item $\{\ol{Z}_1,\ol{X}_1,\ol{Z}_2,\ol{X}_2\}$ are logical Pauli operators associated to the unit cell obeying:
    \be
    \{\ol{X}_{i},\ol{Z}_{i}\}= 0
    \ee
    with all other elements commuting, and:
    \bea
    \left[T_{\bk}\ol{X}_i, \ol{X}_j\right] &=0, \quad \\
    \left[T_{\bk}\ol{Z}_i, \ol{Z}_j\right] &=0, \quad \forall\, \bk \neq \mathbf{0}.\\
    \left[T_{\bk}\ol{X}_i, \ol{Z}_j\right] &=0, \quad \\
    \eea
    This means there is a presentation of the logical Pauli group where the single qubit operators are associated to unit cells, or rather, there are two logical qubits associated locally to each unit cell.

    \item $\ol{Z}_1 = P(0,1,1,0,1,1)$. Here we may use the freedom in the identification of $N(S)/S$ with the logical Pauli group.
\end{itemize}
If such generators exist, this code transmutes all single qubit Pauli errors to the set of logical phase errors consisting of $\ol{Z}_1$ and its translations.

Algebraically the above is equivalent to finding elements $a_i, b_i$ $i=1,2$ of $R^{2n}$, which together with $g$ generate $\text{ker}\,\ep$, such that
\be
    a_i^\dagger \Lambda b_j = \delta_{ij}, \quad
    a_i^\dagger \Lambda a_j = 0, \quad
    b_i^\dagger \Lambda b_j = 0,
\ee
with $a_1 = (0,1,1,0,1,1)^{T}$. One can then set $\ol{Z}_i = P(a_i)$ and $\ol{X}_i=P(b_i)$. A particular choice is given by:
\bea
    a_2 &= (0,1,0,xy,0,1)^{T},\\
    b_1 &= (0,0,0, 1+y , 0 ,1)^{T},\\
    b_2 &= (0,1,0, x+y+xy , 1 ,0)^{T}.
\eea
The corresponding logical operators are displayed in figure \ref{fig:translation_logicals}.

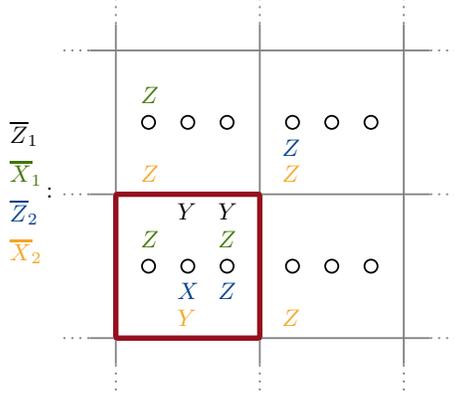
\begin{figure}[ht!]
    \centering

\tikzset{every picture/.style={line width=0.75pt}} 

\scalebox{0.9}{
\begin{tikzpicture}[x=0.55pt,y=0.55pt,yscale=-1,xscale=1]

\draw [color={rgb, 255:red, 128; green, 128; blue, 128 }  ,draw opacity=1 ][line width=0.75]    (110,90) -- (330,90) ;
\draw [color={rgb, 255:red, 148; green, 16; blue, 31 }  ,draw opacity=1 ][line width=2.25]    (110,310) -- (220,310) ;
\draw [color={rgb, 255:red, 128; green, 128; blue, 128 }  ,draw opacity=1 ][line width=0.75]    (330,90) -- (330,310) ;
\draw [color={rgb, 255:red, 128; green, 128; blue, 128 }  ,draw opacity=1 ][line width=0.75]    (220,90) -- (220,200) ;
\draw [color={rgb, 255:red, 128; green, 128; blue, 128 }  ,draw opacity=1 ][line width=0.75]    (110,90) -- (110,200) ;
\draw [color={rgb, 255:red, 128; green, 128; blue, 128 }  ,draw opacity=1 ][line width=0.75]  [dash pattern={on 0.84pt off 2.51pt}]  (110,50) -- (110,70) ;
\draw [color={rgb, 255:red, 128; green, 128; blue, 128 }  ,draw opacity=1 ][line width=0.75]    (110,70) -- (110,90) ;

\draw   (190,145) .. controls (190,142.24) and (192.24,140) .. (195,140) .. controls (197.76,140) and (200,142.24) .. (200,145) .. controls (200,147.76) and (197.76,150) .. (195,150) .. controls (192.24,150) and (190,147.76) .. (190,145) -- cycle ;
\draw   (160,145) .. controls (160,142.24) and (162.24,140) .. (165,140) .. controls (167.76,140) and (170,142.24) .. (170,145) .. controls (170,147.76) and (167.76,150) .. (165,150) .. controls (162.24,150) and (160,147.76) .. (160,145) -- cycle ;
\draw   (130,145) .. controls (130,142.24) and (132.24,140) .. (135,140) .. controls (137.76,140) and (140,142.24) .. (140,145) .. controls (140,147.76) and (137.76,150) .. (135,150) .. controls (132.24,150) and (130,147.76) .. (130,145) -- cycle ;
\draw   (300,145) .. controls (300,142.24) and (302.24,140) .. (305,140) .. controls (307.76,140) and (310,142.24) .. (310,145) .. controls (310,147.76) and (307.76,150) .. (305,150) .. controls (302.24,150) and (300,147.76) .. (300,145) -- cycle ;
\draw   (270,145) .. controls (270,142.24) and (272.24,140) .. (275,140) .. controls (277.76,140) and (280,142.24) .. (280,145) .. controls (280,147.76) and (277.76,150) .. (275,150) .. controls (272.24,150) and (270,147.76) .. (270,145) -- cycle ;
\draw   (240,145) .. controls (240,142.24) and (242.24,140) .. (245,140) .. controls (247.76,140) and (250,142.24) .. (250,145) .. controls (250,147.76) and (247.76,150) .. (245,150) .. controls (242.24,150) and (240,147.76) .. (240,145) -- cycle ;
\draw   (300,255) .. controls (300,252.24) and (302.24,250) .. (305,250) .. controls (307.76,250) and (310,252.24) .. (310,255) .. controls (310,257.76) and (307.76,260) .. (305,260) .. controls (302.24,260) and (300,257.76) .. (300,255) -- cycle ;
\draw   (270,255) .. controls (270,252.24) and (272.24,250) .. (275,250) .. controls (277.76,250) and (280,252.24) .. (280,255) .. controls (280,257.76) and (277.76,260) .. (275,260) .. controls (272.24,260) and (270,257.76) .. (270,255) -- cycle ;
\draw   (240,255) .. controls (240,252.24) and (242.24,250) .. (245,250) .. controls (247.76,250) and (250,252.24) .. (250,255) .. controls (250,257.76) and (247.76,260) .. (245,260) .. controls (242.24,260) and (240,257.76) .. (240,255) -- cycle ;
\draw   (190,255) .. controls (190,252.24) and (192.24,250) .. (195,250) .. controls (197.76,250) and (200,252.24) .. (200,255) .. controls (200,257.76) and (197.76,260) .. (195,260) .. controls (192.24,260) and (190,257.76) .. (190,255) -- cycle ;
\draw   (160,255) .. controls (160,252.24) and (162.24,250) .. (165,250) .. controls (167.76,250) and (170,252.24) .. (170,255) .. controls (170,257.76) and (167.76,260) .. (165,260) .. controls (162.24,260) and (160,257.76) .. (160,255) -- cycle ;
\draw   (130,255) .. controls (130,252.24) and (132.24,250) .. (135,250) .. controls (137.76,250) and (140,252.24) .. (140,255) .. controls (140,257.76) and (137.76,260) .. (135,260) .. controls (132.24,260) and (130,257.76) .. (130,255) -- cycle ;
\draw [color={rgb, 255:red, 148; green, 16; blue, 31 }  ,draw opacity=1 ][line width=2.25]    (110,200) -- (220,200) ;
\draw [color={rgb, 255:red, 148; green, 16; blue, 31 }  ,draw opacity=1 ][line width=2.25]    (220,200) -- (220,310) ;
\draw [color={rgb, 255:red, 148; green, 16; blue, 31 }  ,draw opacity=1 ][line width=2.25]    (110,200) -- (110,310) ;
\draw [color={rgb, 255:red, 128; green, 128; blue, 128 }  ,draw opacity=1 ][line width=0.75]    (220,200) -- (330,200) ;
\draw [color={rgb, 255:red, 128; green, 128; blue, 128 }  ,draw opacity=1 ][line width=0.75]    (220,310) -- (330,310) ;
\draw [color={rgb, 255:red, 128; green, 128; blue, 128 }  ,draw opacity=1 ][line width=0.75]  [dash pattern={on 0.84pt off 2.51pt}]  (220,50) -- (220,70) ;
\draw [color={rgb, 255:red, 128; green, 128; blue, 128 }  ,draw opacity=1 ][line width=0.75]    (220,70) -- (220,90) ;

\draw [color={rgb, 255:red, 128; green, 128; blue, 128 }  ,draw opacity=1 ][line width=0.75]  [dash pattern={on 0.84pt off 2.51pt}]  (330,50) -- (330,70) ;
\draw [color={rgb, 255:red, 128; green, 128; blue, 128 }  ,draw opacity=1 ][line width=0.75]    (330,70) -- (330,90) ;

\draw [color={rgb, 255:red, 128; green, 128; blue, 128 }  ,draw opacity=1 ][line width=0.75]  [dash pattern={on 0.84pt off 2.51pt}]  (370,90) -- (350,90) ;
\draw [color={rgb, 255:red, 128; green, 128; blue, 128 }  ,draw opacity=1 ][line width=0.75]    (350,90) -- (330,90) ;

\draw [color={rgb, 255:red, 128; green, 128; blue, 128 }  ,draw opacity=1 ][line width=0.75]  [dash pattern={on 0.84pt off 2.51pt}]  (370,200) -- (350,200) ;
\draw [color={rgb, 255:red, 128; green, 128; blue, 128 }  ,draw opacity=1 ][line width=0.75]    (350,200) -- (330,200) ;

\draw [color={rgb, 255:red, 128; green, 128; blue, 128 }  ,draw opacity=1 ][line width=0.75]  [dash pattern={on 0.84pt off 2.51pt}]  (370,310) -- (350,310) ;
\draw [color={rgb, 255:red, 128; green, 128; blue, 128 }  ,draw opacity=1 ][line width=0.75]    (350,310) -- (330,310) ;

\draw [color={rgb, 255:red, 128; green, 128; blue, 128 }  ,draw opacity=1 ][line width=0.75]  [dash pattern={on 0.84pt off 2.51pt}]  (110,350) -- (110,330) ;
\draw [color={rgb, 255:red, 128; green, 128; blue, 128 }  ,draw opacity=1 ][line width=0.75]    (110,330) -- (110,310) ;

\draw [color={rgb, 255:red, 128; green, 128; blue, 128 }  ,draw opacity=1 ][line width=0.75]  [dash pattern={on 0.84pt off 2.51pt}]  (220,350) -- (220,330) ;
\draw [color={rgb, 255:red, 128; green, 128; blue, 128 }  ,draw opacity=1 ][line width=0.75]    (220,330) -- (220,310) ;

\draw [color={rgb, 255:red, 128; green, 128; blue, 128 }  ,draw opacity=1 ][line width=0.75]  [dash pattern={on 0.84pt off 2.51pt}]  (330,350) -- (330,330) ;
\draw [color={rgb, 255:red, 128; green, 128; blue, 128 }  ,draw opacity=1 ][line width=0.75]    (330,330) -- (330,310) ;

\draw [color={rgb, 255:red, 128; green, 128; blue, 128 }  ,draw opacity=1 ][line width=0.75]  [dash pattern={on 0.84pt off 2.51pt}]  (70,90) -- (90,90) ;
\draw [color={rgb, 255:red, 128; green, 128; blue, 128 }  ,draw opacity=1 ][line width=0.75]    (90,90) -- (110,90) ;

\draw [color={rgb, 255:red, 128; green, 128; blue, 128 }  ,draw opacity=1 ][line width=0.75]  [dash pattern={on 0.84pt off 2.51pt}]  (70,200) -- (90,200) ;
\draw [color={rgb, 255:red, 128; green, 128; blue, 128 }  ,draw opacity=1 ][line width=0.75]    (90,200) -- (110,200) ;

\draw [color={rgb, 255:red, 128; green, 128; blue, 128 }  ,draw opacity=1 ][line width=0.75]  [dash pattern={on 0.84pt off 2.51pt}]  (70,310) -- (90,310) ;
\draw [color={rgb, 255:red, 128; green, 128; blue, 128 }  ,draw opacity=1 ][line width=0.75]    (90,310) -- (110,310) ;

\draw  [color={rgb, 255:red, 148; green, 16; blue, 31 }  ,draw opacity=1 ][line width=1.5]  (110,200) -- (220,200) -- (220,310) -- (110,310) -- cycle ;

\draw (127,116) node [anchor=north west][inner sep=0.75pt]  [color={rgb, 255:red, 65; green, 117; blue, 5 }  ,opacity=1 ]  {$Z$};
\draw (186,266) node [anchor=north west][inner sep=0.75pt]  [color={rgb, 255:red, 0; green, 62; blue, 138 }  ,opacity=1 ]  {$Z$};
\draw (155,266) node [anchor=north west][inner sep=0.75pt]  [color={rgb, 255:red, 0; green, 62; blue, 138 }  ,opacity=1 ]  {$X$};
\draw (235,156) node [anchor=north west][inner sep=0.75pt]  [color={rgb, 255:red, 0; green, 62; blue, 138 }  ,opacity=1 ]  {$Z$};
\draw (235,286) node [anchor=north west][inner sep=0.75pt]  [color={rgb, 255:red, 245; green, 166; blue, 35 }  ,opacity=1 ]  {$Z$};
\draw (235,176) node [anchor=north west][inner sep=0.75pt]  [color={rgb, 255:red, 245; green, 166; blue, 35 }  ,opacity=1 ]  {$Z$};
\draw (186,205) node [anchor=north west][inner sep=0.75pt]    {$Y$};
\draw (155,205) node [anchor=north west][inner sep=0.75pt]    {$Y$};
\draw (127,226) node [anchor=north west][inner sep=0.75pt]  [color={rgb, 255:red, 65; green, 117; blue, 5 }  ,opacity=1 ]  {$Z$};
\draw (186,226) node [anchor=north west][inner sep=0.75pt]  [color={rgb, 255:red, 65; green, 117; blue, 5 }  ,opacity=1 ]  {$Z$};
\draw (155,286) node [anchor=north west][inner sep=0.75pt]  [color={rgb, 255:red, 245; green, 166; blue, 35 }  ,opacity=1 ]  {$Y$};
\draw (127,176) node [anchor=north west][inner sep=0.75pt]  [color={rgb, 255:red, 245; green, 166; blue, 35 }  ,opacity=1 ]  {$Z$};

\draw (27,142.4) node [anchor=north west][inner sep=0.75pt]    {$\overline{Z}_{1}$};
\draw (27,172.4) node [anchor=north west][inner sep=0.75pt]  [color={rgb, 255:red, 65; green, 117; blue, 5 }  ,opacity=1 ]  {$\overline{X}_{1}$};
\draw (27,202.4) node [anchor=north west][inner sep=0.75pt]  [color={rgb, 255:red, 0; green, 62; blue, 138 }  ,opacity=1 ]  {$\overline{Z}_{2}$};
\draw (27,232.4) node [anchor=north west][inner sep=0.75pt]  [color={rgb, 255:red, 245; green, 166; blue, 35 }  ,opacity=1 ]  {$\overline{X}_{2}$};

\draw (52,193) node [anchor=north west][inner sep=0.75pt]  {$\,:$};

\end{tikzpicture}
}
    \caption{The logical operators associated to the unit cell for the translation invariant code.}
    \label{fig:translation_logicals}
\end{figure}

To check that $\{a_i, b_i, g\}$ generate $\text{ker}\,\ep$, we first compute a generating set of the syzygy module of the module generated by the columns of $\ep$. Without loss of generality, we may compute the syzygy module of $\ep$ multiplied by some monomial of non-negative degree. Note that $xy \ep$ contains $1$ as a column, and thus a Gröbner basis of $xy \ep$ is just $\{1\}$. The algorithm for computing syzygies then tells us that the columns of $I - B xy \ep$, where $B$ is any element $R^{2n}$ such that $\ep \cdot B = 1$ (which may be found by a polynomial division algorithm), generate the syzygy module. It is then not difficult to check that these syzygy generators are contained in the $R$-module generated by $\{a_i,b_i, g\}$.

In summary, the translation invariant code with stabiliser and logical operators in figures \ref{fig:translation_stabiliser} and \ref{fig:translation_logicals}, transmutes all single qubit Pauli errors to single qubit phase errors on half the logical qubits.

By an exhaustive search, it can be checked that there are no error correcting codes (in the usual sense) of distance $3$ with three qubits and one stabiliser per unit cell, such that the stabiliser generators are supported in a $2 \times 2$ grid. Note that, similarly to the example in \ref{sec:lowqubitexample}, if one includes $\ol{Z}_1$ and its translations in the stabiliser group then one obtains an error correcting code.

Note that it is possible to find a single-error correcting code with $n=2$ and a single stabiliser supported on a $2\times2$ grid per unit cell. For example:
\begin{equation}
    g = (xy, xy+y, 1+x+y, 1+xy)^{T}.
\end{equation}
This has a lower encoding rate $\frac{1}{2}$. In fact, translation invariant codes  with encoding rate $\frac{1}{2}$, which are also fermionic encodings, have been found with distance $d=4,5,6,7$ \cite{chen2022error}. For these, the stabiliser generators are higher weight and supported on a larger neighbourhood of the unit cell.

\subsection{17 Qubit CSS Code }\label{sec:CSScodes}

We now give an example of a CSS code \cite{Calderbank:1995dw, Steane:1995vv} encoding two logical qubits in seventeen physical qubits, which transmutes all weight two Pauli errors to single qubit phase errors.

A CSS code is specified by two classical error correcting codes $C_i$,  $i=1,2$, with parameters $[n, k_i, d_i]$. The dual of each code must be contained in the other code: $C_1^{\perp} \subset C_2$, and $C_2^{\perp}\subset C_1$. A set of generators of the stabiliser of the quantum code is specified by the rows of the parity check matrix $P_1$ of $C_1$, by replacing each $1$ with a $Z$, together with the rows of $P_2$, replacing each $1$ by an $X$. The parameters of the quantum code are $[n, k_1+k_2-n, \min(d_1,d_2)]$. If $d_1 \neq d_2$, the code may also be regarded as an $[n,k_1+k_2-n, d_1 / d_2]$ \textit{asymmetric} code  \cite{Ioffe_2007}, which corrects $\lfloor \frac{d_1-1}{2}\rfloor$ $X$ errors, $\lfloor \frac{d_2-1}{2}\rfloor$ $Z$ errors, and $\lfloor \frac{\min(d_1,d_2)-1}{2}\rfloor$ $Y$ errors. This is particularly useful in the case of a biased noise model $\cE$ where, for instance, the probability of a phase error is greater than that of a bit flip.

In our example, we take $C_1$ to be the $[17,9,5]$ binary QR code \cite{prange1957cyclic}, which is a cyclic BCH code with generator polynomial:
\begin{equation}
    g(x) = 1+x^3+x^4+x^5+x^8.
\end{equation}
A codeword corresponds to the coefficients of a polynomial, and a generating set for the codewords is given by $g(x), xg(x),\ldots, x^8g(x)$. The generator matrix for the code has rows given by these codewords. The parity check matrix may be obtained in the standard way. See \cite{aly2007quantum} for more on BCH codes.

Instead of describing $C_2$ directly, we instead describe $C_2^{\perp}$, which must be a subcode of $C_1$. The even subcode of $C_1$ may be generated by $\tilde{g}(x),x\tilde{g}(x),\ldots, x^7\tilde{g}(x)$ where $\tilde{g} := (1+x^5)g(x)$. We take $C_2^{\perp}$ to be generated by the first seven of these generators, i.e.\  $\tilde{g}(x),x\tilde{g}(x),\ldots, x^6\tilde{g}(x)$. These yield the parity check matrix of $C_2$.

One can check that $C_2$ has distance 3, thus the code $\text{CSS}(C_1,C_2)$ as an asymmetric error correcting code has parameters $[17,2,3/5]$. However, one can check that all 3 qubit $X$ errors lie in the coset of $N(S)/S$ corresponding to $1+x^3+x^7$
and all 4 qubit $X$ errors in that of
$x^3+x^6+x^{10}+x^{11}$. These cosets are distinct and thus correspond to different logical operators which, as pure $X$ operators, commute. We can identify them with the logical operators $\{\ol{X}_1, \ol{X}_2\}$, or equivalently $\{\ol{Z}_1, \ol{Z}_2\}$.

Thus: if the admissible error set is defined to be these two single logical qubit $\ol{X}$ (or $\ol{Z}$) operators, the code can transmute all Pauli errors up to weight two on the physical qubits. The distance of the code is effectively increased to $d_{\text{eff}}=5$. We expect there are similar constructions for other asymmetric CSS codes.

The above code was found by searching over codes $C_2$, such that $C_2^\perp$ is contained in the classical $[17,9,5]$ QR code. The search also demonstrated that it is not possible for $C_2$ to be cyclic for the same $n, k_2$. There are codes with better parameters as asymmetric CSS codes, however it is not necessarily true they will have error transmuting properties (e.g.\ the $[15,3,3/5]$ code \cite{aly2008asymmetric}  does not).

\subsection{Operations on QET Codes}

We briefly discuss two constructions, which produce new error correcting codes from old ones, that can also be applied to error transmuting codes.

\subsubsection*{Concatenation}

Suppose we have an error transmuting code $\cC_1$ of parameters $[n_1,k]$, specified by a stabiliser $S$, for an admissible error set $\ol{M}$. We know from section \ref{sec:effcodedistance} that the effective distance $d_{\text{eff}}$ is bounded below by the minimum weight of an element in $N(S) \backslash M$, which we denote $\tilde{d}_{\text{eff}}$. Suppose we concatenate the code with a code $\cC_2$ with parameters $[n_2,1]$ which has error \textit{correcting} distance $d_2$, by encoding each of the $n_1$ physical qubits of $\cC_2$ as the logical qubit of a copy of $\cC_2$. The resulting code $\cC'$ has parameters $[n_1n_2,k_1]$ with an effective distance at least $\tilde{d}_{\text{eff}}d_2$.

To see this note that the logical operators of the resulting code $\cC'$, whose stabiliser we denote $S'$, can be generated by the same logical operators as $\cC_1$ but with Pauli operators in the tensor products replaced by the corresponding encoded logical operators of $\cC_2$. Using this, the isomorphism $\cA: \ol{\cG}_{k} \rightarrow N(S)/S$ can be lifted naturally to a corresponding isomorphism $\cA': \ol{\cG}_{k} \rightarrow N(S')/S'$, which defines the lift $M'$ of the admissible error set $\ol{M}$ now in $N(S')$. The elements of $N(S')\backslash M$ must have minimum weight $\tilde{d}_{\text{eff}}d_2$ since they must have weight $d_2$ on at least $\tilde{d}_{\text{eff}}$ blocks of $n_2$ qubits.

\subsubsection*{Majorana Fermion Code Construction}

The authors of \cite{Bravyi:2010de} demonstrated how a $[n,k,d]$ qubit stabiliser code $\cC$ could be mapped to a $[4n,2k,2d]$ weakly self-dual CSS code $\text{CSS}(\cC)$. `Weakly self-dual' refers to the fact that the classical codes $\cC_1$, $\cC_2$ specifying the $X$ and $Z$ parts of $\text{CSS}(\cC)$ are identical. Here $d$ is the usual error correcting distance of the code.

Without going into the details of the construction, there is a one-to-one correspondence between each element of $N(S)$ for $\cC$ and a set of $2^n$ pure $X$, and a set of $2^n$ pure $Z$ logical operators of $\text{CSS}(\cC)$. Stabilisers are mapped to/from stabilisers. If the weight of the element of $N(S)$ is $w$, the corresponding logical operators of $\text{CSS}(\cC)$ are weight $2w$. Conversely, a general logical operator $F$ of $\text{CSS}(\cC)$ can be decomposed as $F= c F_X \cdot F_Y$ where $c \in \{\pm 1,\pm i\}$. By the construction, both $F_X$ and $F_Y$ each correspond to a logical operator of $\cC$, of half the weight.

Suppose $\cC$ is an error transmuting code for an admissible error set $\ol{M}$. As before its effective distance is bounded below by $\tilde{d}_{\text{eff}}$, the minimum weight of an element of $N(S) \backslash M$. Let $M_X$ and $M_Z$ be the set of pure $X$ and pure $Z$ logical operators mapped to under the correspondence. Define $M' = \{\pm1, \pm i\} M_X M_Z$. Then one has that:
\be
\min_{E \in M' } \text{weight}(E) = 2\tilde{d}_{\text{eff}},
\ee
so that $\text{CSS}(\cC)$ is an error transmuting code for the logical error set $M'$ of effective distance at least~$2\tilde{d}_{\text{eff}}$.
\enlargethispage{1em}

\paragraph{Further directions} There are many questions which follow naturally from the material presented in this work. Of course, it would be desirable to develop constructions of \textit{families} of codes, whose error transmuting capabilities scale nicely as the number of physical and logical qubits are increased. Perhaps recent developments in the theory of LDPC codes and product constructions \cite{breuckmann2021quantum} can be applied to produce desirable QET codes.  With an eye towards applications, the most pertinent questions are of course whether or not there exist other algorithms which are resilient to a particular form of residual noise on the logical qubits, or other encodings of physical systems for which there are admissible errors corresponding to natural, physical noise (and are able to transmute higher weight errors).

\paragraph{Acknowledgements} The authors would like to thank Charles Derby, Joel Klassen and Ashley Montanaro for many helpful discussions in the process of completing this work.

\bibliographystyle{ieeetr}
\bibliography{qet}

\end{document}